\documentclass[11pt]{article}
\usepackage[margin=1.1in]{geometry}
\usepackage{amsmath,amssymb,amsthm}
\usepackage[numbers,sort&compress]{natbib}
\usepackage{graphicx}
\usepackage{float}
\usepackage{tikz}
\usetikzlibrary{arrows.meta,positioning,fit,shapes.geometric}
\usepackage[colorlinks=true,linkcolor=blue,citecolor=blue,urlcolor=blue]{hyperref}

\newtheorem{theorem}{Theorem}
\newtheorem{proposition}{Proposition}
\newtheorem{corollary}{Corollary}

\theoremstyle{definition}
\newtheorem{definition}{Definition}
\newtheorem{example}{Example}
\theoremstyle{remark}
\newtheorem{remark}{Remark}

\newcommand{\dist}{\mathrm{dist}}
\newcommand{\cone}{\mathrm{cone}}
\newcommand{\crew}{\mathrm{crew}}
\newcommand{\STOP}{\mathtt{STOP}}
\newcommand{\R}{\mathbb{R}}

\title{Collective Counterfactual Planning:\\ Coordination, Consent, and Verification\\ under Representational Constraints}
\author{Chainarong Amornbunchornvej\\
\small National Electronics and Computer Technology Center (NECTEC), NSTDA\\
\small Pathum Thani, Thailand \qquad \texttt{[chainarong.amo@nectec.or.th]}}

\begin{document}
\maketitle

\begin{abstract}
Groups routinely complete projects that no single member can plan, execute, or verify alone. We propose a formal model of this phenomenon, \emph{Collective Counterfactual Planning} (CCP), in which the binding limitation on each agent is neither capability, nor knowledge, nor observability, but \emph{representational geometry}: each agent $i$ perceives the shared state, conceives prospective moves, consents to actions, and certifies goal requirements only through an orthogonal projection onto an agent-specific subspace $V_i$ of a common task space (the subspaces themselves are common knowledge; only consent thresholds are private). Four gates --- the exogenous implementation coalitions $R(u)$, together with three representational gates: conception $C_i(x,O)$, consent thresholds on interpretability, and task-relative verification qualification $Q_k$ --- jointly determine whether a team can reach a conjunctive goal $O=\bigcap_k O_k$ \emph{and legitimately recognize that it has done so}. We define the Collective Counterfactual Solvability (CCS) problem and separate three semantic levels: geometric feasibility, executable attainment, and validated completion. The main results expose a positive--negative duality. Iterated cross-agent relay can unlock a solution that no one-shot pooling of individual plans contains, but any goal requirement depending essentially on the subspace dark to the entire team is unverifiable and therefore not validly completable, even when the physical trajectory accidentally attains it. Memoryless and audited consent further constrain different objects --- action directions versus cumulative trajectory states --- and neither dominates the other. On the constructive side, a four-step exhaustive horizon-bounded solvability scheme (coverage preflight, relay closure, ratification, terminal inspection) is sound and complete under exact representation of the complete relay closure; restricted implementations remain sound on returned plans but need not be complete. The model thereby gives one geometry for sequential mutual enabling, competent execution of steps whose purpose is invisible to the executor, forced sub-teaming at expertise boundaries, and completion that cannot be validly declared.
\end{abstract}

\section{Introduction}\label{sec:intro}

Consider building a house. No individual on the project --- architect, structural engineer, electrician, plumber, inspector --- represents the entire task. Each understands some aspects of the current state of the project and is blind to others; at each stage, someone sees a useful next move that others cannot yet see; every move requires the participation of specific tradespeople, each of whom will act only on work they sufficiently understand; and the project is finished not when the building happens to satisfy the code, but when each requirement has been \emph{certified by someone qualified to check it}. The team does not need to be able to build every possible house. It needs a collection of people whose combined representational and practical competencies suffice to plan, execute, and verify \emph{this} house.

This paper proposes a minimal formal model of that situation. Its central commitment is that the limitation doing the work is \emph{representational}: what an agent can perceive, think of, agree to, and certify is bounded by the subspace of the task's state space that the agent can represent. This is a deliberately different modeling choice from the three dominant traditions in multi-agent planning, which locate agent limitation in \emph{capability} --- which operators an agent owns \citep{brafman2008one,torreno2017cooperative} --- in \emph{knowledge} --- which epistemic states an agent can distinguish \citep{bolander2011epistemic,engesser2017cooperative} --- or in \emph{observability} --- which signals an agent receives \citep{bernstein2002complexity,oliehoek2016concise}. The point of the contrast is not that those frameworks cannot model persistent failure; each is expressive. The point is that CCP holds those channels conceptually separate and makes representational structure itself the explicit object of analysis --- and, so isolated, it yields limits of its own kind: failures that persist because the missing dimension is not information an agent lacks but structure the agent cannot host.

The linear-algebraic substrate follows the cognitive-geometric framework of \cite{amornbunchornvej2026cognitive} and is consonant with the broader case that concepts and cognitive states are usefully modeled as vectors in structured spaces \citep{gardenfors2004conceptual,kriegeskorte2013representational,piantadosi2024concepts,pmlr-v235-park24c}. Each agent $i$ carries a subspace $V_i$ of a common task space. Four gates then govern collective action, and they are not of one kind. \emph{Implementation} is exogenous task structure: each action names the coalition of agents physically or institutionally required to perform it, and nothing about $R(u)$ is derived from geometry. The other three are induced by the agents' representational geometry: \emph{conception} (an agent can generate a move only if it makes apparent progress in the agent's projection of the task), \emph{consent} (a required agent participates only in moves that are sufficiently interpretable to that agent), and \emph{verification} (an agent is qualified to certify a goal requirement only if all distinctions relevant to that requirement are visible in the agent's subspace).

\paragraph{Contributions.}
\begin{enumerate}
\item \textbf{Model.} A parsimonious formal setting (Section~\ref{sec:model}) that holds the exogenous implementation structure $R(u)$ separate from three gates induced by the agents' representational geometry --- conception, consent, and qualification; none of the four implies another.
\item \textbf{Problem.} The Collective Counterfactual Solvability problem (Section~\ref{sec:problem}), with three semantic levels separating \emph{geometric} feasibility, executable \emph{attainment}, and \emph{validated} completion. Objective attainment and validated completion come apart by design.
\item \textbf{Collective reach and collective blindness.} Iterated cross-agent re-expansion can make a solution available that is absent from every one-shot individual tree, and the enabling signal must pass through a component visible to the next agent (Theorem~\ref{thm:relay}, Corollary~\ref{cor:visible}). Dually, a requirement that depends essentially on the team's dark subspace has no qualified verifier and cannot be validly completed (Theorem~\ref{thm:dark}).
\item \textbf{Consent geometry.} Memoryless consent gates action directions, whereas audited consent gates cumulative trajectory states. The resulting solvability notions are incomparable (Theorems~\ref{thm:dichotomy}--\ref{thm:separation}), and the same geometry can force non-plenary sub-teaming (Proposition~\ref{prop:cones}).
\item \textbf{Exhaustive solvability scheme and correctness.} A four-step exhaustive horizon-bounded scheme --- coverage preflight, relay closure, ratification, terminal inspection --- is sound and complete under exact representation of the complete relay closure (Theorem~\ref{thm:correctness}). A restricted practical search is still sound on any returned plan, but may miss existing solutions (Corollary~\ref{cor:restricted}).
\end{enumerate}

The model is deliberately non-strategic: agents are truthful, share the goal, and consent mechanically. What remains when incentives, deception, and disagreement are all removed is the paper's subject --- the purely representational obstacles to collective planning, and the sense in which teamwork, sub-teaming, and institutionalized verification are forced by geometry rather than chosen by convention.

\section{Related work}\label{sec:related}

\paragraph{Multi-agent planning.}
MA-STRIPS and its successors model heterogeneity through operator ownership \citep{brafman2008one}; cooperative multi-agent planning is surveyed in \citet{torreno2017cooperative}; the complexity landscape of classical planning is anchored by \citet{bylander1994computational} and its textbook treatment \citep{ghallab2016automated}. Epistemic planning locates the limitation in knowledge, with Dynamic Epistemic Logic as the standard vehicle \citep{bolander2011epistemic}. Closest to our relay phenomenon is \emph{implicit coordination} \citep{engesser2017cooperative}, where perspective shifts allow one agent to plan on the assumption that another will recognize and continue the plan; the mechanism there is epistemic indistinguishability, whereas ours is projection geometry, and no analogue of our span-based impossibility or verification qualification arises. Decentralized POMDPs locate the limitation in observability and yield the field's canonical hardness results \citep{bernstein2002complexity,oliehoek2016concise}. Required-cooperation analysis \citep{zhang2016required} characterizes when cooperation is unavoidable under \emph{capability} heterogeneity; our Proposition~\ref{prop:cones} is the representational counterpart. Team formation with required skills \citep{lappas2009finding,juarez2021comprehensive} is the discrete ancestor of the team-synthesis problem our model raises but defers.

\paragraph{Coordination and collective agency.}
The classical accounts locate what coordination needs in salience
\citep{schelling1980strategy}, equilibrium structure
\citep{Cooper1999,VanHuyck1990}, or common knowledge and its fragility
\citep{rubinstein1989electronic,fagin1995reasoning}; group agency is treated in
\citet{ListPettit2011}, cooperative problem solving in
\citet{WooldridgeJennings1999,Wooldridge2009}, and open problems in cooperative
AI in \citet{Dafoe2020}. Agreement dynamics on time-varying interaction
networks provides a complementary mathematical account of collective
convergence \citep{chazelle2011total}. Amornbunchornvej et al.\ formalize
coordination events, initiators, and following mechanisms from dynamic
interaction data \citep{10.1145/3201406}; Amornbunchornvej and Berger-Wolf
model heterogeneous following strategies through which individuals respond
to neighbors, particular individuals, or combinations of influences in
producing group-level coordination
\citep{amornbunchornvej2020framework}. Consensus among agents with dissimilar
cognitive architectures is studied information-theoretically by
\citet{SowinskiEtAl2022}, and mutual-intelligibility constraints between
linguistic agents by \citet{KomarovaNiyogi2004}. Each of these presupposes the
representational frame within which focal points are focal and messages are
messages; the present model makes that presupposed layer the variable.
Mathematically, the nearest neighbor is sheaf-theoretic coordination over
heterogeneous stalks with linear restriction maps
\citep{hansen2021opinion}; the aims differ --- consensus dynamics there,
planning, executability, and verification here.

\paragraph{The phenomenon.}
That groups accomplish what no member represents is documented in distributed cognition \citep{hutchins1995cognition} and epistemic dependence \citep{hardwig1985epistemic,hardwig1991role}; ``who knows what'' as a team resource is the subject of transactive memory \citep{Lewis2003TransactiveMemory}. On the organizational side, bounded rationality \citep{rogow1957models}, organization design as information processing under cognitive limits \citep{RePEc:inm:orinte:v:4:y:1974:i:3:p:28-36}, and organizations as interpretation systems \citep{daft1984toward} anticipate the sub-teaming result; verification as institution \citep{Power1997AuditSociety} anticipates the coverage preflight; and invisible or normalized failure \citep{perrow1984normal,vaughan1996challenger,WeickSutcliffeObstfeld1999CollectiveMindfulness} is the empirical face of representationally undetectable failure. Shared intentionality and common ground \citep{tomasello2005shared,clark1996using} ground the shared task and the possibility of joint action.

\paragraph{Terminology.}
``Counterfactual'' here means \emph{within-basis prospective rollout} --- forward simulation of moves from the current state --- and is distinct from the backward-looking causal senses of \citet{pearl2009causality} and \citet{lewis1973counterfactuals}, as well as from models in which the representational basis itself changes. The psychology of such forward simulation is prospection \citep{gilbert2007prospection}. Actions as vectors in conceptual spaces are treated by \citet{gardenfors2012using}, for single events rather than multi-agent planning.

\section{The model}\label{sec:model}

\begin{figure}[H]
\centering
\begin{tikzpicture}[
  node distance=6mm and 9mm,
  font=\small,
  box/.style={draw, rounded corners, align=center, minimum width=2.55cm, minimum height=1.5cm, inner sep=2mm},
  arr/.style={-{Latex[length=2.2mm]}, thick}
]
\node[box, fill=gray!12] (R) {\textbf{Implementation}\\[1pt] $R(u)\subseteq N$\\[3pt] \footnotesize exogenous task structure};
\node[box, fill=blue!7, right=of R] (C) {\textbf{Conception}\\[1pt] $u\in C_i(x,O)$\\[3pt] \footnotesize progress visible in $P_i$};
\node[box, fill=blue!7, right=of C] (K) {\textbf{Consent}\\[1pt] $u\in K_j$ (or audited)\\[3pt] \footnotesize move/state interpretable};
\node[box, fill=blue!7, right=of K] (Q) {\textbf{Verification}\\[1pt] $i\in Q_k$\\[3pt] \footnotesize $P_i$ decides $O_k$ exactly};
\draw[arr] (R) -- (C);
\draw[arr] (C) -- (K);
\draw[arr] (K) -- (Q);
\node[draw, dashed, rounded corners, fit=(C)(K)(Q), inner sep=5mm] (geomfit) {};
\node[below=1mm of geomfit.south, font=\footnotesize, align=center] {induced by representational geometry $\{V_1,\dots,V_n\}$ --- none of the three implies another};
\end{tikzpicture}
\caption{The four gates of CCP. Implementation $R(u)$ is exogenous: it fixes which coalition may perform a direction, independent of anyone's geometry. The remaining three gates are all induced by the agents' subspaces but ask different questions: conception asks whether a move is visible progress \emph{to the agent who would propose it}; consent asks whether the required implementers find the move (memoryless) or the resulting cumulative state (audited) sufficiently interpretable; verification asks whether some agent's projection decides a requirement exactly. A move must clear implementation, conception, and consent to execute (Section~\ref{sec:model}); a goal must clear verification, at every requirement, to be validly declared complete (Definition~\ref{def:stop}).}
\label{fig:gates}
\end{figure}

\subsection{Agents, task, actions}

\begin{definition}[Agents and spans]\label{def:agents}
$N=\{1,\dots,n\}$ agents act in an ambient task space $V=\R^d$, the frame induced by the goal. Each agent $i$ carries a subspace $V_i\subseteq V$ --- the directions $i$ can represent --- with orthogonal projection $P_i$ onto $V_i$ and complement projection $P_i^{\perp}$. The threshold $\tau_i\in(0,1]$ is the minimum interpretability $i$ requires before consenting; its angular form is $\theta_i=\arccos\sqrt{\tau_i}$, with bundling constant $\kappa(\tau_i)=\tan\theta_i$. The group span is $S=V_1+\cdots+V_n$ --- everything somebody can see; its orthogonal complement $S^{\perp}$ is the \emph{dark subspace} --- what nobody can see. Point-to-set distance is $\dist(y,A)=\inf_{a\in A}\|y-a\|$.
\end{definition}

\begin{definition}[Task]\label{def:task}
The start is $x_0\in V$; the goal is a public conjunction of requirements $O=\bigcap_{k=1}^{m}O_k$ with $O_k\subseteq V$; the demand set is $\Delta=O-x_0$. Different requirements may concern different subspaces; all competence notions below are relative to this task only.
\end{definition}

\begin{definition}[Actions as directions]\label{def:actions}
The resource $U$ is a finite set of unit vectors in $V$. Each direction $u\in U$ carries a required coalition $R(u)\subseteq N$, its \emph{necessary implementers}: $u$ can be performed only by these agents acting jointly; if any one is missing, $u$ is unavailable. A move is a token $(u,\lambda)$ with duration $\lambda>0$, under jump dynamics $x_{t+1}=x_t+\lambda_t u_t$.
\end{definition}

\subsection{Conception}

\begin{definition}[Counterfactual availability]\label{def:conception}
Agent $i$ can conceive $u$ at state $x$ iff moving along $u$ makes progress toward the task \emph{in $i$'s projection}:
\begin{equation}\label{eq:C}
u\in C_i(x,O)\iff \exists\lambda>0:\ \dist\big(P_i(x+\lambda u),\,P_iO\big)<\dist\big(P_i x,\,P_iO\big).
\end{equation}
\end{definition}

Progress here is apparent, not actual: projected progress need not be progress in $V$. Two consequences are free from the definition: $C_i(x,O)$ depends on $x$ only through $P_i x$ --- conception is computed inside the agent's subspace --- and $P_i u=0\Rightarrow u\notin C_i(x,O)$: no one conceives a move he cannot see at all. Conception and consent are distinct gates: $u\in C_i(x,O)$ means $i$ can \emph{see the point} of the move from here; $u\in K_i$ (below) means $i$ \emph{interprets} it sufficiently to consent to executing it. Neither implies the other.

\subsection{Interpretability and consent}

\begin{definition}[Interpretability; consent cones]\label{def:interp}
The interpretability of a vector $v\neq0$ to agent $j$ is
\begin{equation}\label{eq:I}
I_j(v)=\frac{\|P_j v\|^2}{\|v\|^2}=\cos^2\angle(v,V_j),
\end{equation}
the fraction of $v$'s squared length lying inside $j$'s subspace; $I_j(\lambda v)=I_j(v)$ for $\lambda>0$ --- magnitude is invisible to interpretability. By convention, $I_j(0)=1$. The consent cone is
\[
K_j=\big\{\,v\in V:\ \|P_j^{\perp}v\|\le\tan\theta_j\,\|P_jv\|\,\big\};
\]
for $v\neq0$, $v\in K_j\iff I_j(v)\ge\tau_j\iff\angle(v,V_j)\le\theta_j$. $K_j$ is a nonconvex double cone: $I_j(v)=I_j(-v)$, so consent sees axes, not arrows.
\end{definition}

\begin{definition}[Consent models; executability]\label{def:consent}
Under \textbf{memoryless} consent, $j$ consents to $(u,\lambda)$ iff $u\in K_j$; duration plays no role --- consenting to a direction is consenting to any distance along it. Under \textbf{audited} consent, at cumulative displacement $D$, $j$ consents iff the new total stays interpretable: $I_j(D+\lambda u)\ge\tau_j$, audited at jump endpoints. Implementation requires the participation of every necessary agent, and each participates only in steps he interprets: $(u,\lambda)$ \emph{executes} at $D$ iff every $j\in R(u)$ consents under the chosen model. Sovereign execution: consent has no substitute.
\end{definition}

\subsection{Plans, perception, knowledge}

\begin{definition}[Plans]\label{def:plans}
A plan is $p=((u_1,\lambda_1),\dots,(u_T,\lambda_T))$ with skeleton $w=(u_1,\dots,u_T)$; coalitions $R(u_t)$ are public annotation. The crew is $\crew(p)=\bigcup_{t\le T}R(u_t)$, the personnel roster. Consent is strictly per-step: step $t$ needs yeses from $R(u_t)$ only; nobody ever consents to a plan. The trajectory is $x_t=x_0+D_t$ with $D_t=\sum_{s\le t}\lambda_s u_s$ and final state $x_T$. Durations are chosen at extension time (availability is evaluated at realized states), so plans are generated already instantiated. Tokens are public: every agent computes $P_jD_t$ himself; the executed past is public arithmetic.
\end{definition}

\paragraph{Perception and knowledge.}
Agent $i$ perceives the state and each requirement only through his projection, $P_i(x_0+D)$ and $P_iO_k$; components in $V_i^{\perp}$ are dark to $i$. All subspaces $V_1,\dots,V_n$ are public (common knowledge); the thresholds $\tau_j$ are the model's only private information. All agents are truthful: queried consent values and reported inspections are returned as they are. The model concerns representational limits, not strategic behavior.

\subsection{Verification}

\begin{definition}[Task-relative qualification]\label{def:Q}
The qualified verifiers of requirement $O_k$ are
\begin{equation}\label{eq:Q}
Q_k=\big\{\,i\in N:\ \forall x,y\in V,\ \ P_i x=P_i y\ \Rightarrow\ (x\in O_k\iff y\in O_k)\,\big\}.
\end{equation}
Equivalently, $i\in Q_k$ iff $O_k=P_i^{-1}(P_iO_k)$: membership in $O_k$ is completely determined by $i$'s projection, and the qualified check $P_i x\in P_iO_k$ is \emph{sound and complete for that requirement}.
\end{definition}

Qualification is task-relative only: $i\in Q_k$ implies neither $V_i=V$ nor qualification for any other requirement; different requirements may have different qualified verifiers. Competence is not completion: $Q_k\neq\varnothing$ means someone \emph{can decide} $O_k$, not that $O_k$ holds.

\begin{definition}[Cover; executable stopping]\label{def:stop}
The verification cover holds iff $Q_k\neq\varnothing$ for every $k$ --- a state-independent property of the team--task pair. Executable stopping is
\begin{equation}\label{eq:stop}
\STOP(x)\iff \forall k\ \exists i\in Q_k:\ P_i x\in P_iO_k .
\end{equation}
Under the cover, $\STOP(x)\iff x\in O$: qualified checks decide the objective exactly. Without the cover, landing in $O$ is invisible as an achievement. \textbf{Objective attainment $\neq$ validated project completion.}
\end{definition}

\subsection{Plan language: relay closure}

\begin{definition}[Extension; closure]\label{def:closure}
Agent $i$ may append token $(u,\lambda)$ to a plan whose terminal state is $x$ iff: (1) $i\in R(u)$ --- the extender is one of $u$'s necessary implementers; (2) $u\in C_i(x,O)$, with $\lambda$ chosen so that $\dist(P_i(x+\lambda u),P_iO)<\dist(P_ix,P_iO)$ --- the move is sized to make apparent progress; (3) under memoryless consent, $u\in K_i$. Extenders leave no trace: who thought of a step is not part of $p$. The closure: each agent grows his counterfactual tree from $x_0$ and publishes it whole; take the union; every agent re-expands from every new node's state; repeat to a fixed point, to horizon $h$ (a depth bound; results are $h$-relative); write $L_h$ for this horizon-$h$ closure.
\end{definition}

Prefix-dependence is real: availability at a node depends on the state its prefix created, so $L_h$ has no closed form over a fixed alphabet --- the relay iteration is essential, not distributed bookkeeping. In the comparison variant with exogenously state-independent availability, order-irrelevance returns for the memoryless model and $T\le d$ suffices (Proposition~\ref{prop:special}).

\section{The problem}\label{sec:problem}

\begin{definition}[CCS]\label{def:ccs}
\textsc{Collective Counterfactual Solvability}.\\
Given $(V,\{V_i\},\{\tau_i\},U,R,x_0,\{O_k\}_{k=1}^m;$ consent model; horizon $h)$: does a plan $p\in L_h$ exist such that every step executes and $\STOP(x_T)$ holds? The synthesis version returns $p$ or $\bot$.
\end{definition}

\paragraph{Three semantic levels.}
(1) \emph{Geometric}: $\Delta\cap\cone(U)\neq\varnothing$, where $\cone(U)$ is the set of nonnegative combinations of directions --- could the available physical moves ever add up? (2) \emph{Attainment}: an executable relay plan lands in $O$ --- the group can conceive and execute a route to the objective. (3) \emph{Validated} ($=$ CCS yes): attainment together with a verification cover --- the group can also legitimately recognize that it is done. The gap between levels 1 and 2 is created by conception and consent; the gap between 2 and 3 is created by verification. \emph{Discoverability} is treated separately as an algorithmic property: for an exhaustive horizon-bounded search it coincides with validated solvability by Theorem~\ref{thm:correctness}, while a restricted practical search may be incomplete.

\section{Results}\label{sec:results}

All results are stated for the model of Section~\ref{sec:model}. We first fix a running example; all its numerical claims are machine-checked.

\begin{example}[E1]\label{ex:e1}
Two agents, two actions, and a two-part goal in the plane ($d=2$).
\begin{enumerate}
\item \textbf{Agents.} Agent 1 sees only the horizontal axis ($V_1=\mathrm{span}(e_1)$); agent 2 sees only the vertical axis ($V_2=\mathrm{span}(e_2)$). Both demand $\tau_1=\tau_2=0.8$, a consent half-angle of $\theta\approx26.57^{\circ}$.

\item \textbf{Actions.} $U=\{u_1,u_2\}$. Direction $u_1=(\cos\alpha,\sin\alpha)$ with $\alpha=20^{\circ}$ is nearly horizontal but slightly tilted, and only agent 1 can perform it ($R(u_1)=\{1\}$). Direction $u_2=(0,-1)$ points straight down, and only agent 2 can perform it ($R(u_2)=\{2\}$).

\item \textbf{Task.} $O_1=\{x:\langle x,e_1\rangle=1\}$: bring the horizontal coordinate to $1$. $O_2=\{x:\langle x,e_2\rangle=0\}$: end with the vertical coordinate at $0$. So $O=\{(1,0)\}$ and $x_0=0$. Each requirement is cylindrical over its owner's axis, so $1\in Q_1$ and $2\in Q_2$: the cover holds, and each agent certifies exactly his own requirement.

\item \textbf{Why agent 2 cannot begin.} At $x_0$ the vertical coordinate is already $0$: agent 2's projected distance to his goal is zero, no move can strictly improve it, and so $u_2\notin C_2(x_0,O)$. From where agent 2 stands, the job looks finished before it has started.

\item \textbf{Agent 1's step.} Agent 1 conceives $u_1$ (it reduces his horizontal distance, so $u_1\in C_1(x_0,O)$) and consents to it ($I_1(u_1)=\cos^2 20^{\circ}\approx0.883\ge0.8$). Choosing $\lambda_1=1/\cos\alpha$ lands his coordinate exactly on target: the state becomes $x_1=(1,\tan\alpha)\approx(1,0.364)$. The tilt is the point: since $P_2u_1=\sin\alpha\neq0$, the step disturbs the vertical coordinate that agent 2 watches.

\item \textbf{The flip.} At $x_1$ agent 2 now sees a problem ($0.364\neq0$), so $u_2\in C_2(x_1,O)$: a move inconceivable at $x_0$ has become conceivable because of someone else's action. Agent 2 consents ($I_2(u_2)=1$) and executes $\lambda_2=\tan\alpha$, reaching $x_T=(1,0)\in O$.

\item \textbf{Finish.} Agent 1 inspects $O_1$, agent 2 inspects $O_2$; both pass, so $\STOP(x_T)$ holds. All quantities are machine-checked. This one instance witnesses the relay of Theorem~\ref{thm:relay}, the loop inversion of Proposition~\ref{prop:loop}, and the memoryless direction of Theorem~\ref{thm:separation}.
\end{enumerate}
\end{example}

\begin{theorem}[Unverifiability from darkness]\label{thm:dark}
Suppose some requirement $O_k$ depends essentially on $S^{\perp}$: there exist $x\in O_k$ and $x'=x+z\notin O_k$ with $z\in S^{\perp}\setminus\{0\}$. Then $Q_k=\varnothing$; hence no verification cover exists and CCS $=$ no --- for every $x_0$, $U$, $R$, and consent model, even if the physical trajectory lands in $O$.
\end{theorem}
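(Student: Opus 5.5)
The plan is to show directly that no agent can be qualified for $O_k$, by exhibiting for each $i\in N$ a pair of points with equal $i$-projection but opposite membership in $O_k$. The natural pair is the one supplied by the hypothesis, $x\in O_k$ and $x'=x+z\notin O_k$ with $z\in S^{\perp}\setminus\{0\}$. The key step is that $P_i z=0$ for every $i$. Since $V_i\subseteq S=V_1+\cdots+V_n$, we have $S^{\perp}\subseteq V_i^{\perp}$, and so $z\perp V_i$. Hence $P_i x'=P_i x+P_i z=P_i x$. Yet $x\in O_k$ while $x'\notin O_k$, which violates the defining implication of $Q_k$ in \eqref{eq:Q} for agent $i$. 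Since $i$ was arbitrary, $Q_k=\varnothing$.

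The rest follows from the definitions. By Definition~\ref{def:stop}, the verification cover requires $Q_k\neq\varnothing$ for every $k$, so the cover fails. Moreover $\STOP(y)$ requires, for this $k$, some $i\in Q_k$ with $P_i y\in P_iO_k$. Since $Q_k$ is empty, that existential is false at every state $y\in V$, so $\STOP(y)$ holds nowhere. Consequently no plan $p\in L_h$, executable or not, can have $\STOP(x_T)$, and CCS (Definition~\ref{def:ccs}) answers no.

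The argument never uses $x_0$, $U$, $R$, the thresholds, the consent model, or the horizon. It also does not depend on whether the realized $x_T$ happens to lie in $O$, because $\STOP$ fails pointwise everywhere. This gives the stated uniformity and the remark about accidental attainment.

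There is no real obstacle here. The only step needing care is the inclusion $S^{\perp}\subseteq V_i^{\perp}$, which comes from $V_i\subseteq S$ together with the order-reversing property of orthogonal complements. One should also note that $Q_k$ quantifies over all of $V$, not just reachable states, which is what lets the hypothesised pair serve as the witness.
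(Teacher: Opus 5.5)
Your proof is correct and takes essentially the same route as the paper's: $V_i\subseteq S$ gives $S^{\perp}\subseteq V_i^{\perp}$, so $P_iz=0$, and the hypothesised pair then defeats qualification for every $i$; with $Q_k$ empty, $\STOP$ is false at every state and CCS answers no. Your remarks on uniformity in $x_0$, $U$, $R$, the consent model, and the horizon make explicit what the paper leaves implicit.
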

\begin{proof}
For every $i$, $V_i\subseteq S$, so $S^{\perp}\subseteq V_i^{\perp}$ and $P_iz=0$; hence $P_ix=P_ix'$ while $x\in O_k$ and $x'\notin O_k$. No $i$ satisfies \eqref{eq:Q}, so $Q_k=\varnothing$. $\STOP$ requires a qualified certifier for every requirement, so it is false at every state, and no plan satisfies Definition~\ref{def:ccs}.
\end{proof}

\begin{remark}[Conception is equally blind]\label{rem:blindconception}
If $u\in U$ lies wholly in $S^{\perp}$ then $P_iu=0$ for all $i$, so by Definition~\ref{def:conception} $u$ is never conceivable and never appendable: deliberate motion along $S^{\perp}$ occurs only as a side-component of visible directions. Unsolvability and its undetectability coincide on $S^{\perp}$.
\end{remark}

\begin{theorem}[Relay]\label{thm:relay}
There exist instances on which the one-shot union of the agents' counterfactual trees at $x_0$ contains no plan reaching $O$, while the fixed-point closure contains a plan achieving $\STOP$.
\end{theorem}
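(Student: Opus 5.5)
The statement is existential, so I will prove it by exhibiting the instance of Example~\ref{ex:e1} and checking two things: the one-shot union fails, and the closure succeeds.

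\emph{The one-shot union contains no solution.} First I determine each agent's tree grown from $x_0=0$ alone. Agent 2 can append only $u_2$, since $R(u_1)=\{1\}$ rules him out of $u_1$. At $x_0$ his projected distance $\dist(P_2x_0,P_2O)=0$, so no move can strictly decrease it; hence $u_2\notin C_2(x_0,O)$ and his tree is the root only.

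Agent 1 can append only $u_1$. I then check which depths of his tree can reach $O$:
\begin{itemize}
\item Every node of his tree has the form $\mu u_1$ with $\mu>0$.
\item Such a point has second coordinate $\mu\sin\alpha>0$, so it never lies in $O_2$.
\item Deeper nodes would also need $u_1\in C_1$ at states whose horizontal coordinate is $\mu\cos\alpha$. That is possible only while $\mu\cos\alpha\neq1$, and even then the second coordinate stays positive.
\end{itemize}
The union therefore consists of the root and the ray nodes $\mu u_1$, and none of these lies in $O=\{(1,0)\}$. In particular the root $x_0=0$ is not in $O$.

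\emph{The closure contains a solution.} Here I re-expand from agent 1's node $x_1=u_1/\cos\alpha=(1,\tan\alpha)$. Agent 2's projection now has distance $\tan\alpha>0$ to $P_2O_2$. Moving along $u_2=(0,-1)$ with $\lambda=\tan\alpha$ reduces that distance to $0$, so $u_2\in C_2(x_1,O)$.

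I then confirm that every step executes and that the plan validates:
\begin{itemize}
\item Memoryless consent holds, since $I_1(u_1)=\cos^2 20^\circ\approx0.883\ge0.8$ and $I_2(u_2)=1$.
\item Agent 2's step lands at $x_T=(1,0)\in O$.
\item Both requirements are cylinders over their owners' axes, so $1\in Q_1$ and $2\in Q_2$. Hence $\STOP(x_T)$ holds by Definition~\ref{def:stop}.
\end{itemize}
The horizon $h=2$ suffices.

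\emph{Main obstacle.} The delicate step is the first one. I must argue over \emph{all} duration choices in each agent's own tree, not just the canonical one, since durations are chosen freely at extension time. The parametrization $\mu u_1$ with positive second coordinate disposes of this uniformly: no choice of $\lambda$ along $u_1$ ever brings the vertical coordinate back to $0$.

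If the audited model is also meant to be covered, I would additionally check agent 2's step under audited consent. The cumulative displacement is $D=(1,\tan\alpha)$ and the new total is $D+\tan\alpha\,u_2=(1,0)$, so $I_2((1,0))=0<0.8$. Audited consent therefore fails here. I would accordingly state the theorem for the memoryless model, as the example's note indicates.
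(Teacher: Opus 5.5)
Your proof is correct and takes the same approach as the paper's proof. You use Example E1: agent 2's one-shot tree is the bare root because $u_2\notin C_2(x_0,O)$, agent 1's $u_1$-chains keep the $e_2$-coordinate at $\mu\sin\alpha>0$, and re-expansion from $x_1=(1,\tan\alpha)$ lets agent 2 close with $\lambda_2=\tan\alpha$ under memoryless consent with the cover in place. Your closing check that this relay fails under audited consent is exactly the E1 half of Theorem~\ref{thm:separation}, so restricting to the memoryless model, as the paper's proof implicitly does, is appropriate.
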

\begin{proof}
Example E1. \emph{One-shot}: agent 2's tree is the bare root --- the only direction with $2\in R$ is $u_2$, and $u_2\notin C_2(x_0,O)$; agent 1's tree contains only $u_1$-chains, and after any prefix the $e_2$-coordinate equals $(\sum\lambda)\sin\alpha>0$, so no node lies in $O$. \emph{Closure}: agent 1 publishes the branch to $x_1=(1,\tan\alpha)$; re-expanding from $x_1$, agent 2 has $u_2\in C_2(x_1,O)$ with $\lambda_2=\tan\alpha$ realizing projected progress $\tan\alpha\to0$, $2\in R(u_2)$, and $u_2\in K_2$; the extended plan reaches $(1,0)\in O$; every memoryless consent passes; the cover holds and both inspections pass.
\end{proof}

\begin{corollary}[Relays propagate only through visible components]\label{cor:visible}
Since $C_i$ depends on the state only through $P_ix$, $P_2u_1=0\Rightarrow P_2x_1=P_2x_0\Rightarrow C_2(x_1,O)=C_2(x_0,O)$: a move completely invisible to agent 2 cannot make any new counterfactual move available to agent 2. Genuine relay requires the preceding action to alter some component visible in the next agent's representational space. In E1 the flip exists exactly because $P_2u_1=\sin\alpha\neq0$.
\end{corollary}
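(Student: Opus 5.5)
The corollary is essentially the observation, recorded after Definition~\ref{def:conception}, that conception is computed inside the agent's subspace. I would first make that observation precise. By \eqref{eq:C}, whether $u\in C_i(x,O)$ is decided by comparing $\dist(P_ix+\lambda P_iu,\,P_iO)$ with $\dist(P_ix,\,P_iO)$ as $\lambda$ ranges over $(0,\infty)$. Both quantities are built from $P_ix$, $P_iu$, and the fixed set $P_iO$ alone. Hence $P_ix=P_iy$ implies $C_i(x,O)=C_i(y,O)$, for every $u$, with no assumption on $O$.

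Next I would apply this to a single step. Take $x_1=x_0+\lambda_1u_1$. Linearity of the orthogonal projection gives $P_2x_1=P_2x_0+\lambda_1P_2u_1$. So if $P_2u_1=0$, then $P_2x_1=P_2x_0$, and the first step yields $C_2(x_1,O)=C_2(x_0,O)$. I would state this for a general preceding step by agent $i$ and a general next agent $j$, and for any prior state $x$ in place of $x_0$.

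To turn this into the ``cannot make any new move available'' claim, I would treat the other gates of Definition~\ref{def:closure} one by one.
\begin{itemize}
\item Condition (1), $j\in R(u)$, does not depend on the state.
\item Condition (3), under memoryless consent, concerns only $u\in K_j$ and also does not depend on the state.
\item Condition (2) includes the choice of admissible durations $\lambda$. The set of admissible $\lambda$ is again a function of $P_jx$ alone, so it too is unchanged.
\end{itemize}
Therefore the set of tokens $j$ may append at $x_1$ equals the set at $x_0$. An induction over a run of steps all invisible to $j$ extends this: $P_j$ of the state is frozen throughout such a run, so $j$'s menu is frozen too.

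The contrapositive is the ``genuine relay requires a visible component'' sentence. If $j$ gains a conceivable move after a step $u$, then $P_jx$ must have changed, so $P_ju\neq0$. For E1 I would simply cite $P_2u_1=(\sin\alpha)e_2\neq0$, together with the flip shown in the proof of Theorem~\ref{thm:relay}.

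The main subtlety is audited consent. There, executability depends on the full displacement $D$ through $I_j(D+\lambda u)$, not just on $P_jD$. So the ``menu unchanged'' statement should be restricted to conception, which is what the corollary literally asserts, or to memoryless appendability. I would phrase it accordingly rather than claim state-independence of audited consent.
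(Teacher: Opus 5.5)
Your proposal is correct and is essentially the paper's own argument: conception is decided by $P_jx$, $P_ju$, and $P_jO$ alone, and linearity of the projection gives $P_jx_1=P_jx_0$ whenever $P_ju_1=0$; your extension to the full appendability menu, the induction over invisible runs, and the contrapositive are routine elaborations of this. One small refinement to your closing caveat: Definition~\ref{def:closure} imposes its consent clause (3) only under memoryless consent, so the generation menu is also frozen under audited consent, and the only thing that can change under audited consent is ratification via $I_j(D+\lambda u)$.
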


\begin{theorem}[Dichotomy; visitation-local audit]\label{thm:dichotomy}
(i) Memoryless consent imposes no magnitude constraint on an accepted direction: for any $j\in R(u)$ with $u\in K_j$, consent to $(u,\lambda)$ holds for every $\lambda>0$, by scale invariance $I_j(\lambda u)=I_j(u)$. Consent alone therefore never bounds $\|P_j^{\perp}D\|$ whenever some accepted direction has $P_j^{\perp}u\neq0$: auditing is consent's only brake on magnitude. (The conception rule's progress clause does bound the \emph{chosen} duration at generation --- a conception constraint, not a consent one; the claim here concerns consent only.)
(ii) Under audited consent, for every step $t$ with $j\in R(u_t)$, consent is equivalent to
$\|P_j^{\perp}D_t\|\le\tan\theta_j\,\|P_jD_t\|$.
The constraint is visitation-local: it binds only at $j$'s required steps; it is enforced at all $t$ iff $j$ is a plenary auditor ($j\in R(u_t)$ for every $t$); and the endpoint is constrained by $j$ iff $j\in R(u_T)$.
\end{theorem}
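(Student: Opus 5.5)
The proof is a direct unpacking of Definitions~\ref{def:interp} and~\ref{def:consent}; no earlier theorem is needed. I take (i) and (ii) in turn, and (ii) splits into the algebraic equivalence and the three visitation claims.

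For (i), let $j\in R(u)$ with $u\in K_j$. Memoryless consent to $(u,\lambda)$ is, by Definition~\ref{def:consent}, the condition $u\in K_j$, and $\lambda$ does not appear in it. So consent holds for every $\lambda>0$. Equivalently, $K_j$ is closed under positive scaling: the defining inequality $\|P_j^{\perp}v\|\le\tan\theta_j\|P_jv\|$ is homogeneous of degree one, which matches $I_j(\lambda u)=I_j(u)$.

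For the ``never bounds'' clause I will exhibit plans. Take an accepted direction $u$ with $P_j^{\perp}u\neq0$ and consider the single token $(u,\lambda)$, or repeated tokens $(u,\lambda_s)$. Each step passes $j$'s memoryless consent. The resulting $\|P_j^{\perp}D\|=\big(\sum\lambda_s\big)\|P_j^{\perp}u\|$ is unbounded as the durations grow. So no bound on $\|P_j^{\perp}D\|$ follows from consent alone. Only the audited rule refers to $D$, which is why I call auditing the sole brake.

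One point needs care here. The conception clause of Definition~\ref{def:closure} may cap $\lambda$. The statement explicitly places this outside the claim, so I will prove only the consent-level assertion: the set of $(u,\lambda)$ satisfying $j$'s consent predicate is unbounded in $\lambda$.

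For the equivalence in (ii), I write $D_t=D_{t-1}+\lambda_tu_t$ for the post-step cumulative displacement, matching the audit at jump endpoints in Definition~\ref{def:consent}. Audited consent is $I_j(D_t)\ge\tau_j$.

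\emph{Case $D_t\neq0$.} Write $\|D_t\|^2=\|P_jD_t\|^2+\|P_j^{\perp}D_t\|^2$. Then $I_j(D_t)\ge\tau_j$ becomes $(1-\tau_j)\|P_jD_t\|^2\ge\tau_j\|P_j^{\perp}D_t\|^2$. Using $\tan^2\theta_j=(1-\tau_j)/\tau_j$, which follows from $\cos^2\theta_j=\tau_j$, this is $\|P_j^{\perp}D_t\|\le\tan\theta_j\|P_jD_t\|$ after taking nonnegative square roots.

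\emph{Case $D_t=0$.} The convention $I_j(0)=1\ge\tau_j$ makes consent true, and the inequality reads $0\le0$, also true.

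I expect the main (though mild) obstacle to be this bookkeeping: fixing the indexing convention that $D_t$ includes step $t$, and handling the zero vector.

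The visitation claims follow from Definition~\ref{def:consent}, since only members of $R(u_t)$ are queried at step $t$.
\begin{enumerate}
\item \emph{Binds only at required steps.} $j$'s inequality is a constraint exactly at those $t$ with $j\in R(u_t)$. At any other step nothing involving $P_j$ is checked, so $D_t$ may leave $K_j$.
\item \emph{Enforced at all $t$ iff plenary.} If $j\in R(u_t)$ for all $t$, the constraint holds at every $t$. If not, the constraint is simply not imposed at some step. I will read ``enforced'' as ``imposed by the execution rule,'' and show that it can actually fail by a small E1-style witness: a step not requiring $j$ tilts $D$ outside $K_j$.
\item \emph{Endpoint.} The endpoint $D_T$ is audited by $j$ iff $j\in R(u_T)$. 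This is the case $t=T$ of the previous items.
\end{enumerate}
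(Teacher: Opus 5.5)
Your proposal is correct and matches the paper's proof. Part (i) is read off the $\lambda$-independence (scale invariance) of memoryless consent, part (ii) is the same rearrangement $I_j(D_t)\ge\tau_j\iff\|P_j^{\perp}D_t\|^2\le\frac{1-\tau_j}{\tau_j}\|P_jD_t\|^2$ with $\tan^2\theta_j=(1-\tau_j)/\tau_j$ and the convention $I_j(0)=1$ at $D_t=0$, and the locality clauses are restated from the consent definition; your unbounded-duration plan and your E1-style witness that a non-required step can leave $K_j$ are harmless additions that the paper leaves implicit.
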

\begin{proof}
(i) is the scale-invariance of $I_j$ in \eqref{eq:I}, read at the consent gate. (ii): for $D\neq0$, $I_j(D)\ge\tau\iff\|P_jD\|^2\ge\tau(\|P_jD\|^2+\|P_j^{\perp}D\|^2)\iff\|P_j^{\perp}D\|^2\le\frac{1-\tau}{\tau}\|P_jD\|^2$, and $\sqrt{(1-\tau)/\tau}=\tan\theta_j$; at $D=0$ both sides hold trivially, using the convention $I_j(0)=1$. The locality clauses restate Definition~\ref{def:consent}.
\end{proof}

\begin{example}[E2]\label{ex:e2}
$d=2$, $V_1=\mathrm{span}(e_1)$, $V_2=\mathrm{span}(e_2)$, $\tau_1=0.8$, $\tau_2=0.05$; $U=\{e_1,e_2\}$ with $R(e_1)=\{1\}$, $R(e_2)=\{1,2\}$; $O_1=\{x:\langle x,e_1\rangle=1\}$, $O_2=\{x:\langle x,e_2\rangle=0.3\}$; $x_0=0$. The cover holds ($1\in Q_1$, $2\in Q_2$).
\end{example}

\begin{theorem}[Consent-model incomparability]\label{thm:separation}
Neither consent model's solvable set contains the other: E1 is memoryless-solvable and audited-unsolvable, while E2 is audited-solvable and memoryless-unsolvable.
\end{theorem}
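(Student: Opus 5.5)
The proof splits into four claims, each checked against E1 or E2: (a) E1 is memoryless-solvable, (b) E1 is audited-unsolvable, (c) E2 is audited-solvable, (d) E2 is memoryless-unsolvable. Claim (a) is exactly the closure half of the proof of Theorem~\ref{thm:relay}, so I will simply cite it. The two positive claims, (a) and (c), need only one explicit witness plan. The two negative claims, (b) and (d), must hold for \emph{every} plan in $L_h$ and every $h$, so they need invariants.

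For (d), I use the fact that $R(e_2)=\{1,2\}$ includes agent 1, and $I_1(e_2)=0<0.8$. Under memoryless consent, agent 1 therefore never consents to any $e_2$ token, so no $e_2$ step executes. Every executable plan is then an $e_1$-chain, whose states keep $\langle x,e_2\rangle=0\neq0.3$. Hence no executable plan reaches $O_2$, and $\STOP$ never holds.

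For (c), I exhibit the plan $((e_1,1),(e_2,0.3))$. Agent 1 is in $R(e_1)$, conceives $e_1$ at $x_0$, and lands at $(1,0)$. At that state agent 2 is in $R(e_2)$ and conceives $e_2$, since his projected distance drops from $0.3$ to $0$. Under audited consent, clause (3) of Definition~\ref{def:closure} does not apply, so the extension is legal. For the audits at $D=(1,0.3)$ I compute $I_1=1/1.09\approx0.917\ge0.8$ and $I_2=0.09/1.09\approx0.083\ge0.05$. The first step passes trivially, since $I_1(e_1)=1$. The cover holds, and both inspections pass at $(1,0.3)\in O$.

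For (b), I work in the audited form given by Theorem~\ref{thm:dichotomy}(ii). Since $\tan\theta_2=1/2$, a $u_2$ step executes only if $|D_x|\le\frac12|D_y|$ at its endpoint. The key is an invariant along every plan in $L_h$: $x_t=(\text{total }u_1\text{ duration})\cos\alpha$, and $|y_t|\le x_t\tan\alpha$. Upward motion comes only from $u_1$, at slope $\tan\alpha$. A $u_2$ step is conceivable only when $y>0$, and its progress clause forces $\lambda<2y$, so the new value of $y$ lies in $(-y,y)$. By induction, $|y_t|$ never exceeds $x_t\tan\alpha$. Agent 2 cannot start at $x_0$ (Example~\ref{ex:e1}(4)), so any $u_2$ step occurs when $x_t>0$. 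At that point $|D_y|\le x\tan 20^\circ\approx0.364\,x<2x$, so agent 2's audit fails. Hence no $u_2$ step executes, executable plans are $u_1$-chains with $y>0$, and they never reach $O$.

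I expect the invariant in (b) to be the main obstacle, specifically controlling overshoot below zero. Once the progress clause bounds $|y|$ after each downward step, the rest is the arithmetic comparison $\tan 20^\circ<2$.
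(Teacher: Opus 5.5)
Your proposal is correct and matches the paper's proof. It uses the same four-way split, the same witness plans for (a) and (c), and the same veto by agent~1 on $e_2$ for (d). For (b) it rests on the same key comparison: rule~(2) of Definition~\ref{def:closure} caps the overshoot of any $u_2$ step, so at its endpoint the $e_2$-coordinate is at most $\tan\alpha$ times the $e_1$-coordinate in absolute value, whereas agent~2's audit demands at least $\cot\theta_2=2$ times it. The only difference is packaging: you run this as a forward invariant showing that no $u_2$ step in any plan passes the audit, while the paper applies the bound to the last $u_2$ step of a plan reaching $O$ and handles $\mu=0$ separately, a case your uniform invariant covers automatically.
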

\begin{proof}
\emph{E2, audited-solvable:} the plan $(e_1,1),(e_2,0.3)$ is generated (agent 1 extends the first step with $P_1$-progress to $0$; agent 2 extends the second, $e_2\in C_2$ at $(1,0)$ with $\lambda=0.3$ realizing progress) and ratified: $I_1((1,0))=1$, $I_1((1,0.3))=1/1.09\approx0.917\ge0.8$, and $I_2((1,0.3))\approx0.083\ge0.05$; the endpoint lies in $O$ and both inspections pass. \emph{E2, memoryless-unsolvable:} any plan reaching $O$ must change the $e_2$-coordinate, hence contains an $e_2$ step, whose coalition includes agent 1; but $I_1(e_2)=0<\tau_1$, so agent 1 never consents memorylessly.

\emph{E1, memoryless-solvable} is Theorem~\ref{thm:relay}. \emph{E1, audited-unsolvable:} note first that every plan in $L_h$ begins with $u_1$: at $x_0$, $C_2(x_0,O)=\varnothing$ blocks $u_2$, and $2\notin R(u_1)$ blocks agent 2 entirely; consequently the cumulative $e_1$-coordinate is at least $\lambda_{\mathrm{first}}\cos\alpha>0$ forever, since $u_2$ has zero $e_1$-component and $u_1$ adds positively. Reaching $O$ requires final $e_2$-coordinate $0$, so a plan reaching $O$ contains a last $u_2$ step; let $v$ be the $e_2$-coordinate immediately after it and $\mu\ge0$ the total $u_1$-duration after it. Then $v=-\mu\sin\alpha$, and the $e_1$-coordinate at that step is $1-\mu\cos\alpha>0$ by the first-step argument.
If $\mu=0$, the $u_2$ step is last and its audit is $I_2(D_T)=I_2((1,0))=0<\tau_2$; reject.
If $\mu>0$, the audit of that $u_2$ step requires $|v|\ge\cot\theta_2\cdot(1-\mu\cos\alpha)=2(1-\mu\cos\alpha)$. But rule (2) of Definition~\ref{def:closure} --- the chosen duration realizes strict projected improvement --- gives $|v|<|e_2\text{ before the step}|$, and the $e_2$-coordinate before any $u_2$ step is at most $A\sin\alpha$ with $A=(1-\mu\cos\alpha)/\cos\alpha$ the prior $u_1$-duration (earlier $u_2$ steps only shrink $|e_2|$, again by rule (2)). Hence $|v|<(1-\mu\cos\alpha)\tan\alpha$, and the audit demands $(1-\mu\cos\alpha)\tan\alpha>2(1-\mu\cos\alpha)$, i.e.\ $\tan\alpha>2$ --- false at $\alpha=20^{\circ}$.
\end{proof}

\begin{remark}[Mechanism]\label{rem:mechanism}
The two models control different objects. Memoryless consent gates \emph{directions}: an uninterpretable direction is vetoed in every context, however small the step (E2). Audited consent gates \emph{cumulative trajectories}: it forbids interpretable directions from accumulating into uninterpretable positions (E1), yet permits an uninterpretable direction to ride inside interpretable cumulative mass --- in E2, $I_1(e_2)=0$ while $I_1((1,0.3))\approx0.917$. Neither model is simply stricter; they are two distinct readings of ``I only join what I understand.''
\end{remark}

\begin{remark}\label{rem:clause}
The progress-realizing clause of Definition~\ref{def:closure} is essential: without it, an overshooting $u_2$ duration (large $|v|$) satisfies the audit and the separation fails. The separation holds for any witness angle with $\tan\alpha\le\cot\theta_2$ ($=2$ at $\tau_2=0.8$).
\end{remark}

\begin{remark}[Adaptive consent]\label{rem:adaptive}
Definitions~\ref{def:consent} and~\ref{thm:separation} fix a single consent model for the whole plan. Nothing in the model forces this: a plan may instead carry a \emph{consent-policy sequence} $\sigma=(\sigma_1,\dots,\sigma_T)\in\{\mathrm M,\mathrm A\}^T$, with step $t$ ratified by
\[
\mathrm{Consent}_j(t)=
\begin{cases}
u_t\in K_j, & \sigma_t=\mathrm M,\\
I_j(D_t)\ge\tau_j, & \sigma_t=\mathrm A,
\end{cases}
\qquad j\in R(u_t).
\]
The two pure regimes are the constant sequences $\sigma\equiv\mathrm M$ and $\sigma\equiv\mathrm A$; nothing else in Steps~0--1 or~3 changes, and generation (Definition~\ref{def:closure}) is untouched --- $\sigma$ governs ratification only. Write $\mathcal S_{\mathrm M},\mathcal S_{\mathrm A},\mathcal S_{\mathrm{adaptive}}\subseteq L_h$ for the sets of plans ratifiable under, respectively, the pure memoryless model, the pure audited model, and \emph{some} policy $\sigma$. Trivially $\mathcal S_{\mathrm M}\cup\mathcal S_{\mathrm A}\subseteq\mathcal S_{\mathrm{adaptive}}$.
\end{remark}

\begin{proposition}[Adaptive consent is strictly more permissive]\label{prop:adaptive}
There is an instance solvable under adaptive consent but under neither pure regime: $\mathcal S_{\mathrm M}\cup\mathcal S_{\mathrm A}\subsetneq\mathcal S_{\mathrm{adaptive}}$.
\end{proposition}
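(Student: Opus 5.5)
We prove the proposition by \emph{gluing} the two witnesses of Theorem~\ref{thm:separation} into one instance on orthogonal coordinate blocks. The intended solving plan runs the E2 block under audit and then the E1 block without memory. The pure memoryless regime then fails on the E2 block, and the pure audited regime fails on the E1 block.

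\textbf{The instance.} Take $d=4$ and four agents.
\begin{itemize}
\item \emph{Subspaces and thresholds.} $V_1=\mathrm{span}(e_1)$ and $V_2=\mathrm{span}(e_2)$ with $\tau_1=0.8$, $\tau_2=0.05$, copied from E2. $V_3=\mathrm{span}(e_3)$ and $V_4=\mathrm{span}(e_4)$ with $\tau_3=\tau_4=0.8$, copied from E1.
\item \emph{Actions.} $U=\{e_1,e_2,u_1,u_2\}$ with $R(e_1)=\{1\}$ and $R(e_2)=\{1,2\}$. Here $u_1=\cos\alpha\,e_3+\sin\alpha\,e_4$ with $\alpha=20^\circ$ and $R(u_1)=\{3\}$, and $u_2=-e_4$ with $R(u_2)=\{4\}$.
\item \emph{Goal.} $x_0=0$ and $O=O_1\cap O_2\cap O_3\cap O_4$, where $O_1,O_2$ fix $\langle x,e_1\rangle=1$ and $\langle x,e_2\rangle=0.3$, and $O_3,O_4$ fix $\langle x,e_3\rangle=1$ and $\langle x,e_4\rangle=0$.
\end{itemize}
Each $O_k$ is cylindrical over $V_k$, so $k\in Q_k$ and the cover holds. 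Conception for agent $k$ depends only on the $e_k$-coordinate, because both $P_k$ and $P_kO$ live on that axis. The generation arguments of Example~E1 and Theorem~\ref{thm:separation} therefore transfer block by block. Concretely, $p=(e_1,1),(e_2,0.3),(u_1,1/\cos\alpha),(u_2,\tan\alpha)$ lies in $L_h$ for $h\ge4$, since relay re-expansion from every published node lets agent 3 extend after the E2 prefix.

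\textbf{Adaptive solvability.} Use $\sigma=(\mathrm A,\mathrm A,\mathrm M,\mathrm M)$.
\begin{itemize}
\item \emph{Steps 1--2 (audited).} The cumulative displacement $D$ lies in $\mathrm{span}(e_1,e_2)$, so the audits are exactly the E2 numbers: $I_1=1$, $I_1\approx0.917$, $I_2\approx0.083$.
\item \emph{Steps 3--4 (memoryless).} We have $I_3(u_1)=\cos^2 20^\circ\ge0.8$ and $I_4(u_2)=1$.
\end{itemize}
The endpoint is $(1,0.3,1,0)\in O$, and all four inspections pass.

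\textbf{Pure memoryless fails.} Any plan reaching $O$ must move the $e_2$-coordinate. The only direction with an $e_2$-component is $e_2$, whose coalition contains agent 1, and $I_1(e_2)=0<0.8$. This is verbatim the E2 argument.

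\textbf{Pure audited fails.} This is the step needing care, because audits now see the full four-dimensional $D$, and the E1 argument must survive arbitrary interleavings with $e_1,e_2$ steps. The key observation is that the extra coordinates lie in $V_4^{\perp}$, so they only \emph{increase} $\|P_4^\perp D\|$. Consider the last $u_2$ step in a plan reaching $O$. Let $v$ be the $e_4$-coordinate after it and $c$ the $e_3$-coordinate at that moment.

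Since $e_3$-motion comes only from $u_1$, which agent 3 can conceive only while $e_3<1$, we argue as in Theorem~\ref{thm:separation} that $c=1-\mu\cos\alpha>0$. If $\mu=0$, then $c=1$ and $v=0$, so $I_4(D)=0$ and the audit rejects, matching the $\mu=0$ case there. If $\mu>0$, Theorem~\ref{thm:dichotomy}(ii) makes the audit
\[
|v|\ \ge\ \cot\theta_4\,\|P_4^\perp D\|\ \ge\ 2c .
\]
Rule (2) of Definition~\ref{def:closure} still forces $|v|<c\tan\alpha$, because that bound involved only the $e_3,e_4$ coordinates and agent 4's conception, both untouched by the gluing. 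Together these require $\tan\alpha>2$, a contradiction.

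\textbf{Main obstacle.} The delicate point is re-verifying the E1 generation facts in the glued instance. Agents 3 and 4 must still be unable to start out of order: $u_2\notin C_4$ until some $u_1$ step has occurred, whatever $e_1,e_2$ steps precede it. This holds because conception depends only on $P_kx$. With that settled, the inclusion $\mathcal S_{\mathrm M}\cup\mathcal S_{\mathrm A}\subsetneq\mathcal S_{\mathrm{adaptive}}$ follows.
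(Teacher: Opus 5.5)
Your proposal is correct and is essentially the paper's proof. It uses the same four-dimensional orthogonal gluing of E1 and E2 (you only put the E2 block on coordinates 1--2 and the E1 block on 3--4, the reverse of the paper's labeling), the same E2-first plan ratified under $\sigma=(\mathrm A,\mathrm A,\mathrm M,\mathrm M)$, and the same two failure arguments. In particular, the extra block only enlarges $\|P_4^{\perp}D\|$ for the E1 auditor while the generation bound $|v|<c\tan\alpha$ is untouched, and $u_2$ stays inconceivable until some $u_1$ step occurs; these are exactly the paper's two stated adaptations.
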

\begin{proof}
Embed E1 and E2 in orthogonal coordinate blocks of $d=4$: agents $1,2$ and directions $u_1,u_2$ act on coordinates $1,2$ exactly as in E1 ($R(u_1)=\{1\}$, $R(u_2)=\{2\}$, $\tau_1=\tau_2=0.8$), and agents $3,4$ with directions $u_3=e_3$, $u_4=e_4$ act on coordinates $3,4$ exactly as in E2 ($R(u_3)=\{3\}$, $R(u_4)=\{3,4\}$, $\tau_3=0.8$, $\tau_4=0.05$). Requirements stack axis-wise ($x_1=1$, $x_2=0$, $x_3=1$, $x_4=0.3$), each cylindrical over its owner's axis, so the cover holds. Conception reduces block-wise --- it is computed inside $P_i$, and every direction lies in a single block --- so the closure contains the plan running the E2 block first and the E1 relay second,
\[
p^{\star}=\big((u_3,1),\,(u_4,0.3),\,(u_1,1/\cos\alpha),\,(u_2,\tan\alpha)\big),
\]
extended by agents $3,4,1,2$ in turn, with endpoint $(1,0,1,0.3)\in O$.

Ratify $p^{\star}$ under $\sigma=(\mathrm A,\mathrm A,\mathrm M,\mathrm M)$. At steps 1--2 the cumulative displacement still lies entirely in block 2, so the audits take exactly E2's values: $I_3(D_1)=1$, $I_3(D_2)=1/1.09\approx0.917\ge0.8$, and $I_4(D_2)=0.09/1.09\approx0.083\ge0.05$. Steps 3--4 are memoryless, hence state-independent: $I_1(u_1)=\cos^2 20^{\circ}\approx0.883\ge0.8$ and $I_2(u_2)=1$. All consents pass and $\STOP$ holds at the endpoint, so the instance is adaptive-solvable. Note that interpretability does \emph{not} reduce block-wise --- cumulative mass dark to the auditor dilutes $I_j(D)$ (Remark~\ref{rem:mechanism}) --- so the ordering is forced: with the E1 block first, the $u_4$ step fails under both modes, $I_3(u_4)=0<0.8$ memorylessly and, at its audit, $D=(1,0,1,0.3)$ gives $I_3(D)=1/2.09\approx0.478<0.8$ and $I_4(D)=0.09/2.09\approx0.043<0.05$.

Neither pure regime solves the instance. \emph{Memoryless}: any plan reaching $O$ must move coordinate 4, hence contains a $u_4$ step, and $3\in R(u_4)$ with $I_3(u_4)=0<\tau_3$. \emph{Audited}: any plan reaching $O$ has total $u_1$-duration $1/\cos\alpha$ and a last $u_2$ step, and the E1 argument of Theorem~\ref{thm:separation} applies to agent 2's audit there, with two adaptations: $u_2$ is conceivable only at strictly positive $e_2$-coordinate, so positive $u_1$-duration precedes the last $u_2$ step and $1-\mu\cos\alpha>0$; and block-2 mass only enlarges $\|P_2^{\perp}D\|$, strengthening the audit's demand $|v|\ge2(1-\mu\cos\alpha)$ against the generation bound $|v|<(1-\mu\cos\alpha)\tan\alpha$. The contradiction $\tan\alpha>2$ persists at $\alpha=20^{\circ}$.
\end{proof}

\begin{proposition}[Cone shrinkage; forced non-plenary steps]\label{prop:cones}
$\bigcap_{j\in R}K_j$ is antitone in $R$. The two consent models confine different objects at plenary steps ($R(u_t)=N$):
\emph{(i) Memoryless.} Every executed step with coalition $R$ has \emph{direction} in $\bigcap_{j\in R}K_j$; hence an all-plenary memoryless plan has $D_T\in\cone\big(U\cap\bigcap_{j\in N}K_j\big)$, and
$$\Delta\cap\cone\Big(U\cap\bigcap_{j\in N}K_j\Big)=\varnothing\ \Longrightarrow\ \text{every memoryless plan reaching }O\text{ contains a non-plenary step.}$$
\emph{(ii) Audited.} Every plenary step constrains the \emph{cumulative displacement}: $D_t\in\bigcap_{j\in N}K_j$ whenever $R(u_t)=N$; hence an all-plenary audited plan has $D_t\in\bigcap_{j\in N}K_j$ for all $t$, in particular $D_T$, and
$$\Delta\cap\bigcap_{j\in N}K_j=\varnothing\ \Longrightarrow\ \text{every audited plan reaching }O\text{ contains a non-plenary step.}$$
\end{proposition}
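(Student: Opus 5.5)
The argument unwinds the consent definitions directly; no earlier theorem is needed beyond the consent-cone characterization of Definition~\ref{def:interp} and the audited reformulation in Theorem~\ref{thm:dichotomy}(ii).

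\emph{Antitonicity.} This is immediate. If $R\subseteq R'$, then $\bigcap_{j\in R'}K_j$ is an intersection over more sets, so it is contained in $\bigcap_{j\in R}K_j$.

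\emph{Part (i).} I will read off Definition~\ref{def:consent}. A step $(u_t,\lambda_t)$ executes under memoryless consent iff $u_t\in K_j$ for every $j\in R(u_t)$, that is, iff $u_t\in\bigcap_{j\in R(u_t)}K_j$. If every step is plenary, then each executed $u_t$ lies in $U\cap\bigcap_{j\in N}K_j$. Since $D_T=\sum_t\lambda_t u_t$ with $\lambda_t>0$, the final displacement lies in $\cone\big(U\cap\bigcap_{j\in N}K_j\big)$. A plan reaching $O$ has $x_T=x_0+D_T\in O$, hence $D_T\in\Delta$. So if $\Delta$ misses this cone, no all-plenary plan reaches $O$, which is the contrapositive we want. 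The only subtlety is the nonconvexity of $K_j$. It is harmless here because the cone is generated by the admissible directions themselves, not by $K_j$; the argument never needs the cone to stay inside $\bigcap_j K_j$.

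\emph{Part (ii).} I will use the audited condition, written as in Theorem~\ref{thm:dichotomy}(ii) as $I_j(D_t)\ge\tau_j$. For $D_t\neq0$ this is equivalent to $D_t\in K_j$ by Definition~\ref{def:interp}. For $D_t=0$, the point $0$ lies in every $K_j$ by the defining inequality $0\le0$, so the equivalence holds there too. Hence a plenary step $t$ forces $D_t\in\bigcap_{j\in N}K_j$. In an all-plenary plan this holds for every $t$, in particular for $t=T$. Reaching $O$ gives $D_T\in\Delta$, so emptiness of $\Delta\cap\bigcap_{j\in N}K_j$ rules out all-plenary audited plans.

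The main point of care is bookkeeping rather than difficulty. In (ii) I must check that the audit is applied to the cumulative displacement $D_t$ after the step, which is consistent with $D_T$ being the endpoint, and in (i) I must note that durations are strictly positive so that the sum lies in the cone. Neither step should pose a real obstacle.
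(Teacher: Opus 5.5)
Your proposal is correct and follows the paper's proof essentially step for step. Antitonicity comes from set intersection. Part (i) uses memoryless executability of plenary steps, additivity of $D_T=\sum_t\lambda_t u_t$, and the contrapositive. Part (ii) reads the audited condition $I_j(D_t)\ge\tau_j$ as $D_t\in\bigcap_{j\in N}K_j$. Your explicit check that $0\in K_j$ when $D_t=0$ is the same point the paper handles through the convention $I_j(0)=1$ in Theorem~\ref{thm:dichotomy}(ii).
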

\begin{proof}
Antitonicity is set intersection over larger index sets. (i): memoryless executability of a plenary step is $u_t\in\bigcap_{j\in N}K_j$; additivity gives $D_T\in\cone(U\cap\bigcap_jK_j)$ for all-plenary plans; take the contrapositive. (ii): audited executability of a plenary step is $I_j(D_t)\ge\tau_j$ for every $j$, i.e.\ $D_t\in\bigcap_{j\in N}K_j$; in particular $D_T\in\bigcap_{j\in N}K_j$ for all-plenary plans; take the contrapositive. The clauses instantiate Theorem~\ref{thm:separation}'s mechanism: memoryless gates action directions, audited gates cumulative trajectory states.
\end{proof}

Two further facts about the model --- that E1's relay is fully consented by its executor yet purposeless to him (loop inversion), and that removing state-dependence from conception collapses memoryless CCS to a single conic-feasibility test --- are recorded as Proposition~\ref{prop:loop} and Proposition~\ref{prop:special} in Appendix~\ref{app:extra}, since neither is needed for the results that follow.

\begin{remark}[Public arithmetic; stall and sound stop]\label{rem:mech}
Tokens, bases, and requirements are public, so every agent computes every $I_j$-value exactly; the only queried objects are consent values ($\tau_j$ private), truthfully returned. The qualification structure is likewise fixed by the public data: $Q_k$ is determined by the task and the representational structure, and is established by the planning procedure (Step~0 of Section~\ref{sec:algorithm}) rather than by any individual's meta-cognition --- an agent need only perform the qualified checks for the requirements on which that agent is itself qualified, not compute the full qualification structure of the team. Both ends of a plan are mechanically gated: a step whose required implementer fails the consent condition does not execute (the plan stalls there --- sovereign execution enforces itself), and termination runs through the qualified checks, sound and complete for their requirements, so a declared completion is a completion.
\end{remark}

\section{Exhaustive solvability scheme, correctness, and complexity}\label{sec:algorithm}

For the fixed horizon $h$, recall that $L_h$ is the relay closure of Definition~\ref{def:closure} truncated at depth $h$. The mathematical scheme below is \emph{exhaustive}: Step~1 ranges over every instantiated plan in $L_h$, including every duration choice satisfying the projected-progress clause. This extensional formulation separates correctness from the separate question of whether $L_h$ admits an efficient finite representation.

\begin{figure}[H]
\centering
\begin{tikzpicture}[
  node distance=4mm and 8mm,
  font=\footnotesize,
  proc/.style={draw, rounded corners, align=center, text width=8.6cm, minimum height=0.85cm, fill=blue!6, inner sep=1.5mm},
  term/.style={draw, rounded corners, align=center, text width=1.7cm, minimum height=0.75cm, fill=red!8, inner sep=1mm},
  win/.style={draw, rounded corners, align=center, text width=2.5cm, minimum height=0.75cm, fill=green!10, inner sep=1mm},
  arr/.style={-{Latex[length=1.9mm]}, thick}
]
\node[proc] (s0) {\textbf{Step 0 --- coverage preflight:} determine $Q_k$ for every $O_k$};
\node[term, right=14mm of s0] (b0) {return $\bot$\\ \scriptsize (Thm.~\ref{thm:dark})};

\node[proc, below=9mm of s0] (s1) {\textbf{Step 1 --- exhaustive relay closure:} build $L_h$ (Def.~\ref{def:closure}); keep endpoints in $O$};
\node[term, right=14mm of s1] (b1) {return $\bot$};

\node[proc, below=9mm of s1] (s2) {\textbf{Step 2 --- ratification:} query consent per step (memoryless / audited / adaptive, Rem.~\ref{rem:adaptive}); drop refused candidates};

\node[proc, below=9mm of s2] (s3) {\textbf{Step 3 --- terminal inspection:} each surviving plan, a qualified $i\in Q_k$ checks $O_k$, for every $k$};
\node[win, right=14mm of s3] (w3) {return plan $p$\\ \scriptsize sound \& valid witness};
\node[term, below=9mm of s3] (b3) {return $\bot$};

\draw[arr] (s0) -- node[above,font=\scriptsize]{empty} (b0);
\draw[arr] (s0) -- node[left,font=\scriptsize]{covered} (s1);
\draw[arr] (s1) -- node[above,font=\scriptsize]{none} (b1);
\draw[arr] (s1) -- node[left,font=\scriptsize]{found} (s2);
\draw[arr] (s2) -- (s3);
\draw[arr] (s3) -- node[above,font=\scriptsize]{passes} (w3);
\draw[arr] (s3) -- node[left,font=\scriptsize]{none pass} (b3);
\end{tikzpicture}
\caption{The exhaustive horizon-bounded solvability scheme of this section. Branch labels abbreviate the exit condition: \emph{empty} is $Q_k=\varnothing$ for some $k$; \emph{covered} is the cover holding; \emph{none}/\emph{found} refer to candidate plans reaching $O$; \emph{passes}/\emph{none pass} refer to a surviving plan clearing every qualified check in Step~3. Steps~0--1 are generative and never reject a true witness (Theorem~\ref{thm:correctness}); Steps~2--3 filter candidates against the model's exact predicates. A restricted search that explores only $\widehat L_h\subseteq L_h$ in Step~1 remains sound on any plan it returns, but a $\bot$ from such a search is not a certificate of unsolvability (Corollary~\ref{cor:restricted}).}
\label{fig:scheme}
\end{figure}

\paragraph{Step 0 --- coverage preflight (state-independent).}
Determine each $Q_k$ from the public task and representational structure. If some $Q_k=\varnothing$, return $\bot$ immediately: by Theorem~\ref{thm:dark}, the task cannot be validly completed by this team, regardless of trajectory. Confirm you have an inspector before anyone lifts a hammer.

\paragraph{Step 1 --- exhaustive relay closure (generation).}
Construct the complete horizon-bounded closure $L_h$ by Definition~\ref{def:closure}: every agent re-expands from every generated node, and every improving duration allowed by the definition is retained. Candidate solutions are the generated plans whose endpoints lie in $O$. If there is no such candidate, return $\bot$.

\paragraph{Step 2 --- ratification.}
For each candidate plan, query each required agent's consent value per step (memoryless: on $u_t$; audited: on the realized $D_t$); delete refused candidates. Audited feasibility of a candidate is the constraint system $x_T\in O$ and $\|P_j^{\perp}D_t\|\le\tan\theta_j\|P_jD_t\|$ for all $t$ and all $j\in R(u_t)$. Any symbolic or numerical manipulation of durations must preserve the conception inequalities along the realized trajectory, or the resulting plan is not a member of $L_h$. Thresholds are private, so synthesis is interactive (open problem O2).

\paragraph{Step 3 --- terminal inspection.}
For each surviving plan, every requirement $k$ is inspected by some $i\in Q_k$, who checks $P_ix_T\in P_iO_k$ in his own projection; different requirements may be covered by different agents. If all checks pass, return the plan. If none passes, return $\bot$.

\begin{theorem}[Correctness of the exhaustive horizon-bounded scheme]\label{thm:correctness}
Assume that Step~0 determines the qualification sets $Q_k$ exactly, Step~1 represents the complete closure $L_h$, and the consent and projected-membership tests in Steps~2--3 are evaluated exactly. Then the four-step scheme returns a plan if and only if CCS is yes for the given horizon $h$. Every returned plan is a valid CCS witness.
\end{theorem}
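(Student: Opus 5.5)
The argument splits into soundness (a returned plan is a CCS witness) and completeness (if CCS is yes, the scheme returns a plan). Throughout I unfold Definition~\ref{def:ccs}: CCS is yes iff some $p\in L_h$ has every step executing under the chosen consent model and satisfies $\STOP(x_T)$. The strategy is to show that the scheme's four filters are exactly these three conjuncts. Step~1 enforces membership in $L_h$ together with $x_T\in O$. Step~2 enforces executability. Step~3 enforces $\STOP$. Step~0 checks the cover, which Definition~\ref{def:stop} ties to $\STOP$.

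\emph{Soundness.} Suppose the scheme returns $p$. Then $p$ survived Step~1, so $p\in L_h$ and $x_T\in O$. It was not deleted in Step~2, and the consent tests are evaluated exactly, so every step executes under the chosen model. Under audited consent I would cite Theorem~\ref{thm:dichotomy}(ii) to equate the constraint-system form with $I_j(D_t)\ge\tau_j$. It passed Step~3, so for each $k$ some $i\in Q_k$ has $P_ix_T\in P_iO_k$, which is literally $\STOP(x_T)$ in \eqref{eq:stop}. Hence $p$ is a valid witness. Note that soundness does not need completeness of $L_h$. It only needs that every generated plan genuinely lies in $L_h$, including any manipulated durations, as the Step~2 caveat demands. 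This observation is what Corollary~\ref{cor:restricted} will inherit.

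\emph{Completeness.} Suppose CCS is yes, with witness $p^\ast$. First, Step~0 does not reject. By Definition~\ref{def:stop}, $\STOP(x_T^\ast)$ requires each $Q_k\neq\varnothing$, so the cover holds; this is the contrapositive of Theorem~\ref{thm:dark}, and Step~0 is exact. Second, Step~1 retains $p^\ast$. Step~1 builds the complete closure with every admissible duration, so $p^\ast\in L_h$ is generated. Since the cover holds, Definition~\ref{def:stop} gives $\STOP(x)\iff x\in O$, so $x_T^\ast\in O$ and $p^\ast$ is a candidate. Third, $p^\ast$ survives Step~2 because its steps execute. Fourth, it passes Step~3 because $\STOP(x_T^\ast)$ holds.

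There is one subtlety in Step~3. Passing requires that the inspection by the chosen $i\in Q_k$ succeeds, not merely that some qualified verifier succeeds. I would handle this with the qualification property: for every $i\in Q_k$, $P_ix\in P_iO_k\iff x\in O_k$, because $O_k=P_i^{-1}(P_iO_k)$. So any qualified inspector's verdict agrees with $x_T^\ast\in O_k$, and the choice of inspector is immaterial. Since the scheme returns some passing plan whenever one exists, it returns a plan, though not necessarily $p^\ast$. That returned plan is a witness by the soundness half.

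The main obstacle is making the Step~1 retention argument honest. $L_h$ is defined extensionally, with a fixed-point iteration and continuum many duration choices, so "Step~1 represents $L_h$" must be read as the hypothesis that the candidate set equals $\{p\in L_h: x_T\in O\}$ exactly. I would state this explicitly as the hypothesis being used, rather than argue any finiteness. I would also remark that Step~3's "if none passes, return $\bot$" is well defined only under this extensional reading.
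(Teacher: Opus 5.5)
Your proposal is correct and follows the paper's own proof. Soundness reads membership in $L_h$, executability of every step, and $\STOP(x_T)$ off Steps~1--3, and completeness walks the witness $p^{\star}$ through Steps~0--3 using the exactness hypotheses. Your remark that every $i\in Q_k$ gives the same verdict because $O_k=P_i^{-1}(P_iO_k)$, so the choice of inspector in Step~3 does not matter, settles a point the paper leaves implicit. One small citation fix: the necessity of the cover follows directly from Definition~\ref{def:stop}, not as a contrapositive of Theorem~\ref{thm:dark}, whose hypothesis is stronger than $Q_k=\varnothing$.
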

\begin{proof}
\emph{Soundness.} Suppose the scheme returns a plan $p$. Step~1 guarantees $p\in L_h$. Step~2 retains $p$ only if every required implementer consents at every step, so every step executes under Definition~\ref{def:consent}. Step~3 returns $p$ only if, for every requirement $O_k$, some qualified verifier $i\in Q_k$ finds $P_ix_T\in P_iO_k$. By Definition~\ref{def:stop}, $\STOP(x_T)$ holds. Hence $p$ satisfies Definition~\ref{def:ccs}.

\emph{Completeness.} Suppose CCS is yes. Then there exists a witness $p^{\star}\in L_h$ whose every step executes and for which $\STOP(x_T^{\star})$ holds. Because $\STOP$ requires a qualified verifier for every requirement, Step~0 does not reject. Completeness of Step~1 places $p^{\star}$ among the candidates (and $\STOP(x_T^{\star})$ implies $x_T^{\star}\in O$). Since every step of $p^{\star}$ executes, Step~2 does not delete it. Since $\STOP(x_T^{\star})$ holds, every requirement has a passing qualified check in Step~3. Thus the scheme returns a plan.
\end{proof}

\begin{corollary}[Soundness of restricted search]\label{cor:restricted}
Let a practical implementation explore any subset $\widehat L_h\subseteq L_h$ but apply ratification and terminal inspection exactly. Every plan it returns is a valid CCS witness. However, returning $\bot$ is a certificate of CCS $=$ no only when the explored set is complete, $\widehat L_h=L_h$.
\end{corollary}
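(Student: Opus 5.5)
The corollary has two halves, and I will prove them separately. Both reuse the soundness half of Theorem~\ref{thm:correctness}; the second half also needs an explicit counterexample.

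\emph{Soundness on returned plans.} Let $p$ be any plan returned by the restricted implementation. By construction $p\in\widehat L_h\subseteq L_h$, so $p$ belongs to the plan language required by Definition~\ref{def:ccs}. This inclusion is the only place where the soundness argument of Theorem~\ref{thm:correctness} used Step~1. Ratification is applied exactly, so every required implementer $j\in R(u_t)$ consents at every step under the chosen consent model, and by Definition~\ref{def:consent} every step executes. Terminal inspection is applied exactly, so for each $k$ some $i\in Q_k$ finds $P_ix_T\in P_iO_k$, which is $\STOP(x_T)$ by Definition~\ref{def:stop}. Step~0 also needs care. If the restricted implementation skips it, Step~3 still needs a nonempty $Q_k$ in order to pass, so the cover is certified implicitly. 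Hence $p$ is a valid CCS witness. I will write this as a verbatim replay of the soundness paragraph of Theorem~\ref{thm:correctness}, with $L_h$ replaced by $\widehat L_h$, and note that completeness of Step~1 was never invoked there.

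\emph{When $\bot$ certifies.} If $\widehat L_h=L_h$, the implementation coincides with the exhaustive scheme, and Theorem~\ref{thm:correctness} gives that $\bot$ implies CCS $=$ no. For the converse ("only when"), I will exhibit an instance and a proper subset $\widehat L_h\subsetneq L_h$ on which CCS is yes but the restricted search returns $\bot$. Example E1 with $h\ge2$ under memoryless consent serves. Take $\widehat L_h$ to be the one-shot union of individual trees at $x_0$. This set is a subset of $L_h$, since the closure starts from those trees. By the one-shot half of Theorem~\ref{thm:relay}, it contains no plan reaching $O$, so Step~1 returns $\bot$. By the closure half, however, CCS is yes. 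A cruder witness also works: $\widehat L_h=\varnothing$, or the set of E1 plans that omit the specific duration $\lambda_2=\tan\alpha$.

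The only step needing care is reading "only when" correctly. It cannot mean that every incomplete exploration fails on every instance, because a lucky subset containing a witness still answers correctly, and a subset on a CCS $=$ no instance trivially returns a correct $\bot$. The claim to prove is therefore that completeness of the explored set is what licenses reading $\bot$ as a certificate in general. I will phrase the second half as this existence statement, with the relay instance showing that $\bot$ from a proper subset can be wrong.
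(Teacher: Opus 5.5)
Your proposal is correct and follows the paper's proof. The soundness half is exactly the paper's observation that the soundness argument of Theorem~\ref{thm:correctness} uses only $p\in L_h$ together with exact Steps~2--3. For the ``only when'' clause, you sharpen the paper's one-line remark (a valid witness may lie in $L_h\setminus\widehat L_h$) by exhibiting one: E1 under memoryless consent with $\widehat L_h$ the one-shot union, a clean use of Theorem~\ref{thm:relay}. One small slip: your ``cruder'' witness that omits the duration $\lambda_2=\tan\alpha$ works only at $h=2$, because for $h\ge3$ the plan $(u_1,1/\cos\alpha),(u_2,c),(u_2,\tan\alpha-c)$ with $0<c<\tan\alpha$ avoids that duration and is still a valid witness; your one-shot-union and $\widehat L_h=\varnothing$ witnesses are unaffected.
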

\begin{proof}
The soundness argument of Theorem~\ref{thm:correctness} uses only membership in $L_h$, exact consent, and exact terminal inspection, so it applies to every returned plan from $\widehat L_h$. Completeness may fail when a valid witness lies in $L_h\setminus\widehat L_h$.
\end{proof}

\paragraph{Computational observations and open problems.}
The general model places no representation constraints on the requirements $O_k\subseteq\R^d$, and for arbitrary subsets membership, projected distance, and the qualification test need not be computable. For algorithmic and complexity statements we therefore restrict to \emph{effectively represented} requirements --- those for which these operations are computable --- with affine or polyhedral requirements as the principal example. This makes the input finite, but it does \emph{not} by itself make the continuous duration branching of $L_h$ finite; exact symbolic representation of the state-dependent relay closure remains part of the computational problem. In the state-independent comparison variant, memoryless CCS collapses to a single conic-feasibility test (Proposition~\ref{prop:special}), LP-shaped \citep{boyd2004convex}. In general, prefix-dependent enabling mirrors a central source of combinatorial search in classical planning \citep{bylander1994computational}, while audited CCS additionally imposes nonconvex cumulative-trajectory constraints. \textbf{O1}: the exact representation and complexity of state-dependent relay closure --- including which finite length or symbolic bounds survive. \textbf{O2}: the query complexity of interactive plan synthesis under private thresholds.

\section{Computational illustration}\label{sec:illustration}

A reference implementation accompanies the paper for affine requirements. To keep the search finite, it explores a restricted subset $\widehat L_h\subseteq L_h$ by choosing the projected-optimal improving $\lambda$ at each extension. Every generated token still satisfies Definition~\ref{def:closure}, and ratification and terminal inspection use the model's exact predicates; therefore every returned plan is sound by Corollary~\ref{cor:restricted}. The implementation is \emph{not} claimed complete for general CCP: a valid solution may require a different improving duration. Every numerical claim of Examples E1 and E2 is an executable assertion in the test suite. Figure~\ref{fig:e1} draws the E1 relay geometry; Figure~\ref{fig:tiers} sweeps structured random instances across span coverage and consent thresholds; Figure~\ref{fig:subteam} varies the shared representational core. All three are structured illustrations of the formal results, not empirical validation.

\begin{figure}[t]\centering
\includegraphics[width=.62\linewidth]{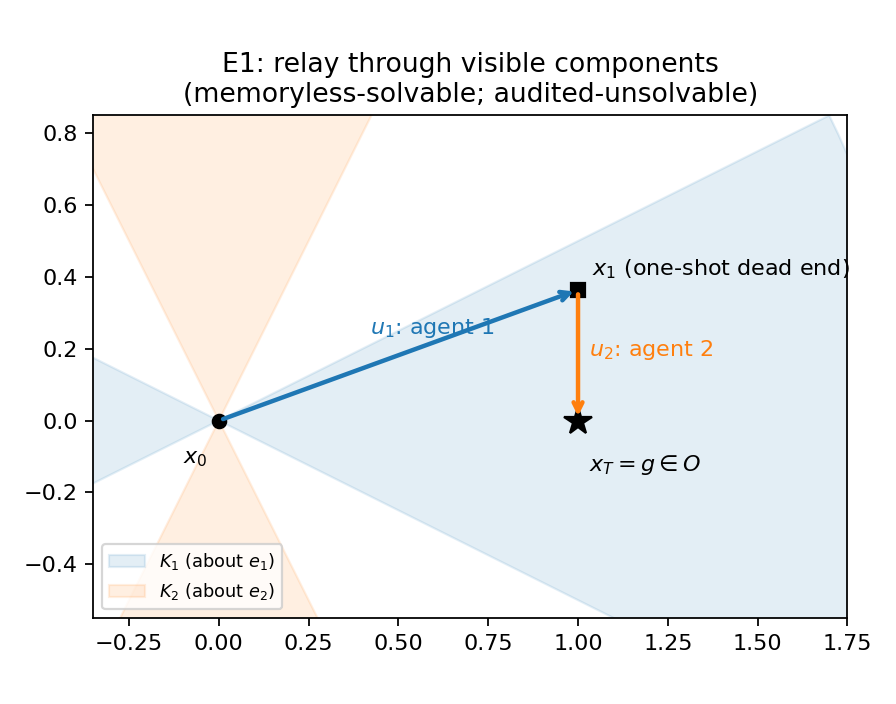}
\caption{A two-step counterfactual relay in E1. Agent 1 changes the state so that agent 2 can conceive a continuation that was unavailable at the start. In agent 2's projection the whole plan is the closed loop $0\to\tan\alpha\to0$, even though the group makes goal-relevant progress along a dimension dark to agent 2. The same instance is memoryless-solvable but audited-unsolvable (Theorem~\ref{thm:relay}, Proposition~\ref{prop:loop}, Theorem~\ref{thm:separation}).}
\label{fig:e1}
\end{figure}

\begin{figure}[t]\centering
\includegraphics[width=.95\linewidth]{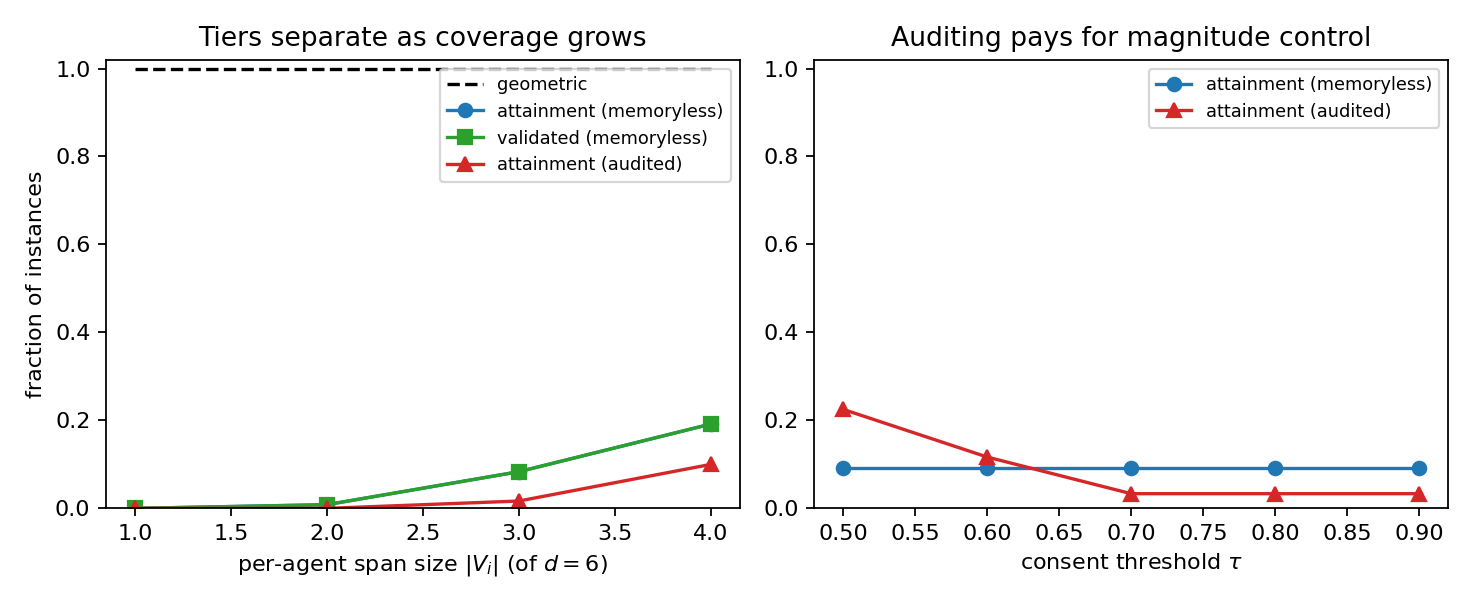}
\caption{Success rates across structured random instances. Left: greater representational coverage increases attainment and validated completion. Right: audited consent helps at lenient thresholds but hurts at strict thresholds, illustrating its incomparability with memoryless consent.}
\label{fig:tiers}
\end{figure}

\begin{figure}[t]\centering
\includegraphics[width=.6\linewidth,trim=0 0 0 40bp,clip]{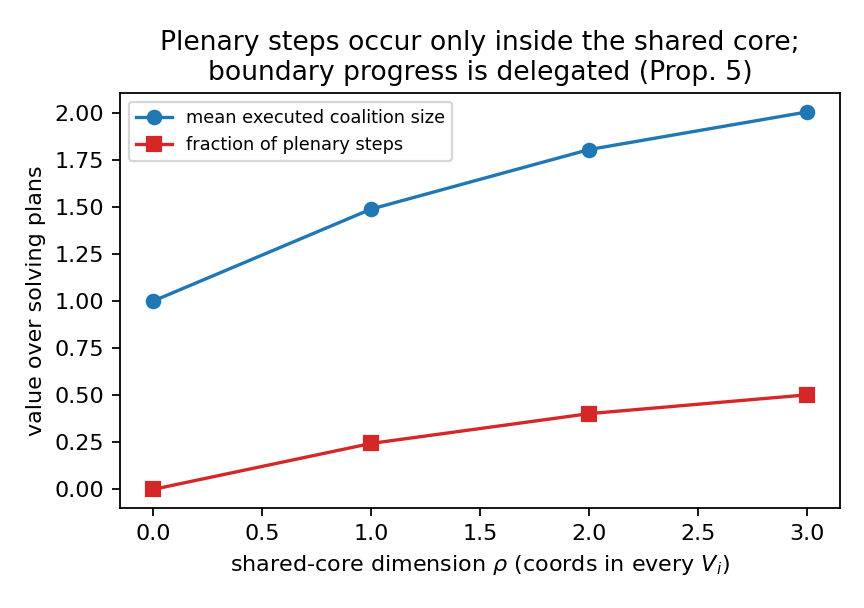}
\caption{Shared representation changes who can act together. As the common representational core grows, successful plans use larger coalitions and more whole-team steps; progress outside that core is carried by smaller subteams. All instances remain solvable.}
\label{fig:subteam}
\end{figure}

\section{Discussion}\label{sec:discussion}

\paragraph{What the results formalize.}
The first phenomenon is \emph{sequential mutual enabling}: a team can fail under one-shot pooling yet succeed when members repeatedly reconsider the evolving state, because one person's move can make another person's next move newly conceivable (Theorem~\ref{thm:relay} and Corollary~\ref{cor:visible}). The second is \emph{competence without global purpose}: an executor may fully understand the step he performs while the reason the step matters lies outside his projection (Proposition~\ref{prop:loop}). The third is \emph{forced sub-teaming}: when plenary participation is confined by the shared representational core, boundary progress may require delegation to proper subcoalitions (Proposition~\ref{prop:cones}). The negative limit is equally sharp: if a task requirement varies along a dimension dark to the whole team, the team cannot validly declare completion even if it accidentally reaches the right physical state (Theorem~\ref{thm:dark}). These phenomena are documented, separately, in distributed cognition \citep{hutchins1995cognition}, epistemic dependence \citep{hardwig1985epistemic}, and high-reliability organizing \citep{WeickSutcliffeObstfeld1999CollectiveMindfulness,vaughan1996challenger}; CCP supplies explicit geometric conditions under which they arise.

\paragraph{Scope.}
Three deliberate exclusions define the regime. No strategy: agents are truthful and share $O$; this is the cognitive skeleton of cooperation with the political flesh removed. No learning: bases are fixed, so the model covers collaboration on timescales shorter than teaching; changes to the representational basis lie outside the present model. And sovereign execution picks out professionalized collaboration --- ``I do not perform steps I do not understand'' is the working ethics of medicine, aviation, and licensed trades --- rather than authority-gated organizations. Within the regime, the claims are face-valid, not validated: whether real teams fail \emph{by these mechanisms} is an empirical bet, and the model's prediction shapes (sub-teaming density at basis boundaries; verification staffing predicting valid completion) say where to test it.

\paragraph{Future work.}
The hardness of general CCS (O1, O2); and team synthesis --- given a task $\{O_k\}$ and a pool of candidate agents with subspaces and thresholds, construct the minimum team for which CCS is yes --- the representational successor of team formation with required skills \citep{lappas2009finding,juarez2021comprehensive}.

\section{Conclusion}\label{sec:conclusion}

Collective Counterfactual Planning models a team as a set of projections of one task space, and derives relay, consent, sub-teaming, and sign-off from representational geometry interacting with the task's implementation structure. Its positive result explains how a group can reach a goal no member can plan alone; its negative result identifies requirements that no amount of effort, honesty, or luck can let the team validly finish. The exhaustive horizon-bounded solvability scheme is correct for the formal CCS problem, while practical restricted searches remain sound but may be incomplete. You do not need a team that can build every possible house. You need people whose combined representational and practical competencies suffice to plan, execute, and verify this one.

\paragraph{Reproducibility.} Code and reproduction materials are available at \url{https://github.com/DarkEyes/Coll-Counterfactual-Plan}. All numerical claims of Examples E1 and E2, and of Proposition~\ref{prop:adaptive}, are executable assertions in the accompanying test suite (\texttt{test\_e1.py}); the figures are produced by \texttt{experiments.py} from the library \texttt{ccp.py}, seeded for reproducibility. The code implements the restricted duration policy described in Section~\ref{sec:illustration}; it is used to check witnesses and illustrate the theory, not as a completeness certificate for general CCS. The audited$\setminus$memoryless direction of Theorem~\ref{thm:separation} was discovered by this code: the first threshold sweep contradicted a draft containment claim, and E2 is the distilled witness. Proposition~\ref{prop:adaptive}'s witness was likewise code-checked: an initial draft ordering failed to ratify under every consent policy, since audited interpretability does not decompose across orthogonal blocks (Remark~\ref{rem:mechanism}); the corrected ordering in the proof is the one the test suite confirms.

\subsection*{Use of generative AI and AI-assisted technologies}
During the preparation of this work the author used Claude (Anthropic)/chatGPT (OpenAI) to edit and polish the draft. After using these tools, the author reviewed and edited the content as needed and takes full responsibility for the content of the publication.

\appendix
\section{Additional results}\label{app:extra}

These two observations sharpen the reading of E1 and of the closure mechanism, respectively, but neither is needed for the paper's main line of argument (Sections~\ref{sec:results}--\ref{sec:algorithm}), so they are collected here rather than in the main text.

\begin{proposition}[Loop inversion]\label{prop:loop}
In the E1 relay plan, executor 2 fully interprets his step ($I_2(u_2)=1$), yet the plan's point is dark to him: his projected view of the whole trajectory is the closed loop $0\to\tan\alpha\to0$, $P_2(x_T-x_0)=0$, and the goal-relevant progress lies along $e_1$ with $P_2e_1=0$. Interpretability gates motion, never purpose.
\end{proposition}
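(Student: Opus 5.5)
The claim is a direct computation on the E1 relay plan of Theorem~\ref{thm:relay}, so I would verify its three assertions in turn.

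\textbf{Step 1: executor 2 interprets his step.} Take the plan $p=((u_1,1/\cos\alpha),(u_2,\tan\alpha))$ from Theorem~\ref{thm:relay}. Since $u_2=(0,-1)$ lies in $V_2=\mathrm{span}(e_2)$, we have $P_2u_2=u_2$. Then Definition~\ref{def:interp} gives $I_2(u_2)=\|P_2u_2\|^2/\|u_2\|^2=1$.

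\textbf{Step 2: agent 2's projected trajectory is a closed loop.} Compute $P_2x_t=\langle x_t,e_2\rangle e_2$ at each state. At the start, $P_2x_0=0$. After agent 1's step, $P_2x_1=(1/\cos\alpha)\sin\alpha=\tan\alpha$. After agent 2's step, $P_2x_2=\tan\alpha-\tan\alpha=0$. So the projected path is $0\to\tan\alpha\to0$, and by linearity $P_2(x_T-x_0)=P_2x_T-P_2x_0=0$.

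\textbf{Step 3: the goal-relevant progress lies along $e_1$.} The net displacement is $x_T-x_0=(1,0)=e_1$, and $P_2e_1=0$ because $e_1\perp V_2$. Hence all net progress toward $O=\{(1,0)\}$ lies in $V_2^{\perp}$, which is dark to agent 2. I would add one observation: agent 2 perceives $O$ only through $P_2O=\{0\}$, and he sees the state there both at the start and at the end. From his own viewpoint, then, the plan achieved nothing, even though he consented fully to his step. This is the content of ``interpretability gates motion, never purpose.''

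There is no real obstacle here. The only care needed is to use the exact durations fixed in E1, so that the loop closes exactly at $0$ rather than merely returning near it.
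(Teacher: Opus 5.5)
Your proposal is correct and follows the paper's proof, which simply appeals to the E1 computations; you write those computations out explicitly: $I_2(u_2)=1$, the $e_2$-coordinates $0\to\tan\alpha\to0$ under the durations $1/\cos\alpha$ and $\tan\alpha$, and $x_T-x_0=e_1$ with $P_2e_1=0$. Identifying $P_2x_t$ with its scalar $e_2$-coordinate is harmless, since the statement itself uses the same convention.
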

\begin{proof}
By the E1 computations.
\end{proof}

\begin{proposition}[State-independent comparison variant]\label{prop:special}
Under the concrete progress rule \eqref{eq:C}, availability is generically state-dependent; for comparison, consider the \emph{variant} in which availability is exogenously state-independent, $C_i(x,O)\equiv C_i$ (replacing Definition~\ref{def:conception}). In this variant, under memoryless consent, if $\Delta\cap\cone(U^{\ast})\neq\varnothing$ and the verification cover holds, where
$U^{\ast}=\{u:\ \forall j\in R(u),\,u\in K_j\ \text{ and }\ \exists i\in R(u),\,u\in C_i\}$,
then a witnessing plan with $T\le d$ exists; hence for every horizon $h\ge d$, CCS holds iff the cover holds and $\Delta\cap\cone(U^{\ast})\neq\varnothing$.
\end{proposition}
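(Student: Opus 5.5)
The proof follows the shape of the classical conic-feasibility argument: Carathéodory's theorem for cones, followed by the observation that in the state-independent variant the relay closure contains every skeleton over $U^{\ast}$ in any order.

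\emph{Sufficiency with $T\le d$.} Suppose $\Delta\cap\cone(U^{\ast})\neq\varnothing$ and pick $\delta=x^{\ast}-x_0$ with $x^{\ast}\in O$. By the conic Carathéodory theorem, $\delta=\sum_{t=1}^{T}\lambda_t u_t$ with $T\le d$, distinct $u_t\in U^{\ast}$ linearly independent, and $\lambda_t>0$. If $\delta=0$, the empty plan is used.

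Form the plan $p=((u_1,\lambda_1),\dots,(u_T,\lambda_T))$ and check membership in $L_h$ for $h\ge T$. For each $t$, membership $u_t\in U^{\ast}$ supplies an $i\in R(u_t)$ with $u_t\in C_i$, so condition (1) of Definition~\ref{def:closure} holds. Condition (3), memoryless consent $u_t\in K_i$, holds because $i\in R(u_t)$ and every member of $R(u_t)$ has $u_t\in K_j$.

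Condition (2) needs care. In the variant it must be read with $C_i$ exogenous, and the progress clause on $\lambda$ must be replaced by membership in $C_i$ alone. I would state this reading explicitly, since it is what ``replacing Definition~\ref{def:conception}'' means: the distance-based duration clause is part of the concrete rule being replaced. Under that reading, any positive duration is admissible and availability does not depend on the prefix. Each extension is then legal at whatever node it is appended to, so $p$ is reached by successive re-expansion in the closure.

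Execution follows from the same fact: every $j\in R(u_t)$ has $u_t\in K_j$, so every memoryless consent passes regardless of state. The endpoint is $x_T=x^{\ast}\in O$, and under the cover Definition~\ref{def:stop} gives $\STOP(x_T)$.

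\emph{The ``iff'' for $h\ge d$.} The backward direction is the construction above with $T\le d\le h$. For the forward direction, let $p\in L_h$ be a CCS witness. Then $\STOP(x_T)$ holds, which requires $Q_k\neq\varnothing$ for every $k$, so the cover holds. Each step $u_t$ was appended by some $i\in R(u_t)$ with $u_t\in C_i$, and each step executes memorylessly, so $u_t\in K_j$ for all $j\in R(u_t)$. Hence every $u_t\in U^{\ast}$. Since $\STOP(x_T)$ together with the cover gives $x_T\in O$, we get $x_T-x_0=\sum_t\lambda_t u_t\in\Delta\cap\cone(U^{\ast})$.

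The main obstacle is the interpretive step in condition (2): confirming that, in the variant, the duration-realizes-progress clause does not survive to restrict $\lambda$. If it did, the Carathéodory coefficients might not be admissible durations. I would settle this by making the variant's closure rule explicit at the start of the proof. The rest, Carathéodory plus the reverse containment, is routine.
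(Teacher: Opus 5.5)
Your proposal is correct and follows essentially the paper's route: once appendability and memoryless executability depend only on the direction, the reachable endpoints are exactly $x_0+\cone(U^{\ast})$, and conic Carath\'eodory gives $T\le d$. Your explicit reading that the variant drops the duration-progress clause of Definition~\ref{def:closure} is the same assumption the paper makes implicitly when it says appendability depends only on the direction. Your separate forward-direction argument spells out what the paper compresses into ``exactly \dots by additivity.''
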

\begin{proof}
Availability and consent are then prefix-independent, so appendability and executability of a token depend only on its direction, and reachable endpoints are exactly $x_0+\cone(U^{\ast})$ by additivity. By conic Carath\'eodory \citep{rockafellar1970convex}, any point of $\cone(U^{\ast})$ is a nonnegative combination of at most $d$ linearly independent members; merging equal-direction tokens gives $T\le d$, which fits within any horizon $h\ge d$. The variant isolates exactly what prefix-dependent enabling contributes: with it removed, order-irrelevance returns, reachable endpoints form a cone, and short plans suffice.
\end{proof}

\bibliographystyle{plainnat}
\bibliography{ccp}

\end{document}